\newcommand{\spara}[1]{\smallskip\noindent{\bf #1}}
\newcommand{\eqdef}{\mathbin{\stackrel{\rm def}{=}}}
\def\hlinewd#1{%
	\noalign{\ifnum0=`}\fi\hrule \@height #1 \futurelet
	\reserved@a\@xhline}
\newtheorem{theorem}{Theorem}
\newtheorem{corollary}[theorem]{Corollary}
\newtheorem{lemma}[theorem]{Lemma}
\newtheorem{claim}[theorem]{Claim}
\newtheorem{definition}{Definition}
\newtheorem{problem}[definition]{Problem}
\newtheorem*{rep@theorem}{\rep@title}
\newcommand{\newreptheorem}[2]{%
	\newenvironment{rep#1}[1]{%
		\def\rep@title{#2 \ref{##1}}%
		\begin{rep@theorem}}%
		{\end{rep@theorem}}}
\newcommand{\R}{\mathbb{R}}
\newcommand{\norm}[1]{\|#1\|}
\newcommand{\Ak}{A_{\setminus k}}
\newcommand{\Ar}{A_{\setminus r}}
\newcommand{\Ap}{A_{\setminus p}}
\DeclareMathOperator*{\argmin}{arg\,min}
\DeclareMathOperator{\tr}{tr}
\DeclareMathOperator{\rank}{rank}
\title{Projection-Cost-Preserving Sketches: \\ Proof Strategies and  Constructions}
\author{
	Cameron Musco\\  UMass Amherst\\  \texttt{cmusco@cs.umass.edu}
	\and
	Christopher Musco\\  New York University\\  \texttt{cmusco@nyu.edu}
}
\begin{document}

\maketitle
\begin{abstract}
In this note we illustrate how common matrix approximation methods, such as random projection and random sampling, yield projection-cost-preserving sketches, as introduced in \cite{feldman2013turning,cohen2015dimensionality}. A projection-cost-preserving sketch is a matrix approximation which, for a given parameter $k$, approximately preserves the distance of the target matrix to all $k$-dimensional subspaces. Such sketches have applications to scalable algorithms for linear algebra, data science, and machine learning.
Our goal is to simplify the presentation of proof techniques introduced in \cite{cohen2015dimensionality} and \cite{cohen2017input} so that they can serve as a guide for future work. We also refer the reader to \cite{chowdhury2019structural}, which gives a similar simplified exposition of the proof covered in Section \ref{sec:matrixApprox}.
\end{abstract}

\section{Projection-Cost-Preserving Sketches}

A projection-cost-preserving sketch is a matrix compression that preserves the distance of a matrix's columns
to any $k$-dimensional subspace. Let $\norm{M}_F^2  = \sum_{i,j}M_{i,j}^2$ denote the squared Frobenius norm of a matrix $M$. Formally we define:

\begin{definition}[Projection-Cost-Preserving Sketch]\label{def:pcp}
$\tilde A \in \R^{n \times m}$ is an $(\epsilon,c,k)$-projection-cost-preserving sketch of $A \in \R^{n \times d}$ if, for any orthogonal projection matrix $P \in \R^{n \times n}$ with rank at most $k$, 
\begin{align}\label{eq:pcp}
(1-\epsilon) \norm{A-PA}_F^2  \le \norm{{\tilde A}-{P}{\tilde A}}_F^2 + c \le (1+\epsilon) \norm{{A}-{PA}}_F^2. 
\end{align}
Here $c$ is a constant that is independent of ${P}$ (but may depend on ${A},{\tilde A},\epsilon,k$).
\end{definition}

In typical applications, $m \ll d$, so $\tilde{A}$ has fewer columns that $A$. It can serve as a surrogate in solving a number of low-rank optimization problems, such as PCA or $k$-means clustering, in which the goal is to chose a $k$-dimensional subspace from some set that is as close as possible to the input matrix.  When $m \ll d$, using $\tilde{A}$ in place of $A$ can lead to significant computational savings in terms of runtime, memory, and communication cost. 

\subsection{Constrained Low-Rank Approximation}
For example, a projection-cost-preserving sketch can be used to approximately solve any  problem of the form:
\begin{problem}[Constrained Low-Rank Approximation]\label{prob:clr}
Let $\mathcal{S}_k$ be the set of all orthogonal projection matrices  in $\R^{n \times n}$ with rank $\le k$. Let $\mathcal{T}$ be any subset of $\mathcal{S}_k$. The constrained low-rank approximation problem over set $\mathcal{T}$ is to find:
$P^\star \in \argmin_{P \in \mathcal T} \norm{A - PA}_F^2.
$
\end{problem}
\noindent A simple manipulation of the bound of Definition \ref{def:pcp} yields:
\begin{claim}
If  $\tilde{A}$ is an $(\epsilon,c,k)$-PCP for $A$, then for any $\mathcal{T} \subseteq \mathcal{S}_k$, if $\tilde{P} \leq \gamma \cdot \min_{P \in \mathcal T} \norm{ \tilde{A} - P\tilde{A} }_F^2$ for some $\gamma \ge 1$, then:
$$
\norm{ A - \tilde{P} A }_F^2 \leq \frac{(1+\epsilon) \gamma}{1-\epsilon} \cdot \min_{P \in \mathcal T} \norm{A-P A}_F^2 + \frac{(1-\gamma)c}{1-\epsilon}.
$$
\end{claim}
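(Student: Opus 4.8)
The plan is to chain the two inequalities of Definition~\ref{def:pcp}: the lower bound applied to the approximate solution $\tilde P$, and the upper bound applied to the true optimum. First I would read the hypothesis as $\norm{\tilde A - \tilde P \tilde A}_F^2 \le \gamma \cdot \min_{P \in \mathcal T}\norm{\tilde A - P\tilde A}_F^2$; that is, $\tilde P$ is a $\gamma$-approximate minimizer of Problem~\ref{prob:clr} \emph{on the sketch} $\tilde A$ over the same constraint set $\mathcal T$. Let $P^\star \in \argmin_{P \in \mathcal T}\norm{A - PA}_F^2$ be an exact optimum of the original problem; since $\mathcal T \subseteq \mathcal S_k$, every projection in play has rank at most $k$, so \eqref{eq:pcp} applies to both $\tilde P$ and $P^\star$.

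Step one: apply the left inequality of \eqref{eq:pcp} with $P = \tilde P$ to get $(1-\epsilon)\norm{A - \tilde P A}_F^2 \le \norm{\tilde A - \tilde P\tilde A}_F^2 + c$. Step two: bound the right-hand side using the $\gamma$-approximation hypothesis together with the trivial fact that $P^\star$ is feasible for the sketched problem, giving $\norm{\tilde A - \tilde P\tilde A}_F^2 \le \gamma\min_{P\in\mathcal T}\norm{\tilde A - P\tilde A}_F^2 \le \gamma\norm{\tilde A - P^\star\tilde A}_F^2$. Step three: apply the right inequality of \eqref{eq:pcp} with $P = P^\star$, rearranged as $\norm{\tilde A - P^\star\tilde A}_F^2 \le (1+\epsilon)\norm{A - P^\star A}_F^2 - c$. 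Substituting steps two and three into step one yields $(1-\epsilon)\norm{A - \tilde P A}_F^2 \le \gamma(1+\epsilon)\norm{A - P^\star A}_F^2 + (1-\gamma)c$, and dividing by $1-\epsilon > 0$ gives the claim, after recalling $\norm{A - P^\star A}_F^2 = \min_{P\in\mathcal T}\norm{A - PA}_F^2$.

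There is no real obstacle here — as the surrounding text says, it is a simple manipulation. The only points that need care are: keeping the additive constant $c$ symbolic throughout (it is the \emph{same} $c$ in both inequalities of Definition~\ref{def:pcp} precisely because $c$ does not depend on $P$, which is what lets the $-\gamma c$ from step three and the $+c$ from step one combine into $(1-\gamma)c$); using $\gamma \ge 1$ only implicitly, via the direction of the approximation inequality, so that the final bound holds regardless of the sign of $c$; and assuming $\epsilon \in [0,1)$ so the division by $1-\epsilon$ is legitimate. One could equally state the result with $P^\star$ replaced by any fixed competitor in $\mathcal T$, since step three uses only feasibility of $P^\star$ and never its optimality, but specializing to the minimizer gives the cleanest statement.
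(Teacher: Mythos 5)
Your proof is correct and is precisely the ``simple manipulation'' the paper alludes to without writing out: apply the lower bound of \eqref{eq:pcp} at $\tilde P$, the upper bound at $P^\star$, and use feasibility of $P^\star$ for the sketched problem to chain them, then divide by $1-\epsilon$. You also correctly parse the typo in the claim's hypothesis ($\tilde P \le \gamma \cdot \min \ldots$ should read $\norm{\tilde A - \tilde P\tilde A}_F^2 \le \gamma \cdot \min \ldots$), and your remarks about $c$ being $P$-independent and about needing $\epsilon < 1$ are exactly the right caveats.
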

Note that if $c$ is positive (as will typically be the case), $\frac{(1-\gamma)c}{1-\epsilon} < 0$, and thus $\tilde P$ gives a relative error approximation to the optimum. This is also true if $c$ is negative and $\gamma \le 1 + \frac{\min_{P \in \mathcal T} \norm{A-P A}_F^2}{|c|}$.

%\todo{Maybe we should cut proof for conciseness.}
%\begin{proof}
%Consider $P^\star \in  \argmin_{P \in \mathcal T} \norm{ {A} - P {A} }_F^2$ and $\tilde P^\star \in \argmin_{P \in \mathcal T} \norm{ \tilde{A} - P\tilde{A} }_F^2$.
%We have:
%\begin{align*}
%\norm{A - \tilde P A}_F^2 &\le \frac{1}{1-\epsilon} \left (\norm{{\tilde A}-{\tilde P}{\tilde A}}_F^2 + c \right )\tag{Projection-cost-preservation \eqref{eq:pcp}.}\\
%&\le \frac{1}{1-\epsilon} \left (\gamma \cdot \norm{{\tilde A}-{\tilde P}^\star{\tilde A}}_F^2 + c \right )\tag{By  the approximate optimality of $\tilde P$ for $\tilde A$.}\\
%&\le \frac{1}{1-\epsilon} \left (\gamma \cdot \norm{{\tilde A}-{P}^\star{\tilde A}}_F^2 + c \right )\tag{By  the optimality of $\tilde P^\star$ for $\tilde A$.}\\
%&= \frac{\gamma }{1-\epsilon} \left (\norm{{\tilde A}-{P}^\star{\tilde A}}_F^2 + c \right ) + \frac{(1-\gamma)c}{1-\epsilon}\\
%&\le \frac{(1+\epsilon)\cdot \gamma}{1-\epsilon} \norm{A-P^\star A}_F^2 + \frac{(1-\gamma)c}{1-\epsilon}.\tag{Projection-cost-preservation \eqref{eq:pcp}.}
%\end{align*}
%This completes the claim.
%\end{proof}

Two important cases of Problem \ref{prob:clr} are vanilla low-rank approximation, when $\mathcal T = \mathcal S_k$ and $k$-means clustering, when $\mathcal{T}$ is the set of projections corresponding to the set of \emph{cluster indicator matrices}. See \cite{cohen2015dimensionality} for details.

\subsection{Sketch Constructions}
It has been shown that a wide variety of dimensionality reduction methods can be used to obtain projection-cost-preserving sketches with positive $c \geq 0$ and dimension $m = \tilde O(k/\epsilon^q)$ for $q \in \{1,2\}$. See Table \ref{tab:constructions} below.

In Section \ref{sec:matrixApprox} we show how to prove that a dimensionality reduction method yields a projection-cost-preserving sketch by appealing to the well-studied matrix approximation guarantees of \emph{subspace embedding}, \emph{approximate matrix multiplication}, and \emph{Frobenius norm preservation}. This proof mirrors the more general proof of  \cite{cohen2015dimensionality} and the proof presented in \cite{chowdhury2019structural}.
In Section \ref{sec:spectralApprox} we show how to prove the projection-cost-preserving sketch guarantee in an alternative way: starting from a spectral approximation bound of the form $(1-\epsilon) A A^T - \lambda I \preceq \tilde A \tilde A^T \preceq (1+\epsilon) A A^T + \lambda I$, where $M \preceq N$ denotes that $N - M$ is positive semidefinite, $I$ is the $n \times n$ identity matrix, and $\lambda$ is an appropriately chosen regularization parameter. This proof mirrors that in \cite{cohen2017input}. 

The two proofs are closely related and both follow a strategy of decomposing $A$ into the projections onto the singular vectors corresponding to its large (head) and small (tail) singular values. Error terms corresponding to these components are then bounded using well-studied matrix approximation guarantees (Section \ref{sec:matrixApprox}) or using the above spectral approximation bound  (Section \ref{sec:spectralApprox}). \cite{chowdhury2019structural} further discusses how the two proof strategies can be unified under a general approach.

\begin{table}[h]
\begin{center}
\begin{tabular}{c | c | c}
Method & Dimension $m$ &  Reference\\
\hline
 SVD & $\lceil k/\epsilon \rceil$ &Theorem 7 of  \cite{cohen2015dimensionality} \\ 
 Approximate SVD & $\lceil k/\epsilon \rceil$ & Theorems 8,9 of \cite{cohen2015dimensionality} \\  
 Random Projection & $O(k/\epsilon^2)$ & Theorem 12 of \cite{cohen2015dimensionality}    \\
 Non-Oblivious Random Projection\footnotemark & $O(k/\epsilon)$ & Theorem 16 of \cite{cohen2015dimensionality}    \\
Ridge Leverage Score Column Sampling & $O(k\log k/\epsilon^2)$ & Theorem 6 of \cite{cohen2017input}    \\
Leverage Score + Residual Column Sampling & $O(k\log k/\epsilon^2)$ & Theorem 14 of \cite{cohen2015dimensionality}    \\
Deterministic Column Selection & $O(k/\epsilon^2)$ & Theorem 15 of \cite{cohen2015dimensionality}    \\
Frequent Directions Sketch & $\lceil k/\epsilon \rceil + k $ & Theorem 31 of \cite{masters}    \\
\end{tabular}
\end{center}
\caption{Known projection-cost-preserving sketch constructions. All theorem references are to the arXiv versions of the cited papers. For randomized constructions, dependencies on success probability are hidden.}
\label{tab:constructions}
\end{table}
\footnotetext{In this method, compute $Z \in \R^{d \times m}$ with orthonormal columns spanning  the rows of $\Pi A$ where $\Pi \in \R^{m \times n}$ is a random projection matrix. Then let $\tilde A = AZ$.}

\section{Proof Via Matrix Approximation Primitives}\label{sec:matrixApprox}
We start by  defining three well-studied matrix approximation primitives:
\begin{definition}[Subspace Embedding]\label{def:subspace}
$S \in \R^{d \times m}$ is an $\epsilon$-subspace embedding for $M \in \R^{n \times d}$ if $\forall x \in \R^n$, $\left | \norm{ x^T M }_2^2 - \norm{ x^T MS }_2^2 \right | \le \epsilon \norm{ x^T M}_2^2$.%\footnote{Here and throughout, $a \in (1\pm \epsilon) b$ denotes that $(1-\epsilon) b \le a \le (1+\epsilon) b$.}
\end{definition}

\begin{definition}[Approximate Matrix Multiplication]\label{def:amm}
$S \in \R^{d \times m}$ satisfies $\epsilon$-approximate matrix multiplication for $M \in \R^{n \times d}$, $N \in \R^{d \times p}$ if $\norm{MN^T - MSS^T N}_F^2 \le \epsilon \cdot \norm{M}_F \cdot \norm{N}_F$.
\end{definition}

\begin{definition}[Frobenius Norm Preservation]\label{def:frob}
$S \in \R^{d \times m}$ satisfies $\epsilon$-Frobenius norm preservation for $M \in \R^{n \times d}$ if $\left | \norm{M}_F^2 - \norm{MS}_F^2 \right | \le \epsilon \norm{M}_F^2$.
\end{definition}

We will also define a useful notion of splitting any matrix into the part in the span of its top $r$ singular vectors and the part outside this span:
\begin{definition}[Head-Tail Split]\label{def:headTail}
For any $M \in \R^{n \times d}$ consider the singular value decomposition $U\Sigma V^T = M$, where $U \in \R^{n \times \rank(M)}, V \in \R^{d \times \rank(M)}$ have orthonormal columns (the left and right singular vectors of $M$ respectively), and $\Sigma$ is a nonnegative diagonal matrix with entries equal to $M$'s singular values $\sigma_{1} \ge \sigma_{2} \ge \ldots \sigma_{\rank(M)} > 0$. 

For any $r \le \rank(M)$ let $U_r \in \R^{n \times r}$, $V_r \in \R^{d \times r}$ denote the first $r$ columns of $U,V$ respectively and let $M_r = U_r U_r^T M = M V_r V_r^T$ and $M_{\setminus r} = M - M_r$. Note that $M_r$ is the optimal $r$-rank approximation of $M$: $M_r = \argmin_{\rank-r\, C} \norm{M-C}_F^2$.
\end{definition} 

We now have the following theorem, which is very similar to Theorem 2 of \cite{chowdhury2019structural}:
\begin{theorem}[Projection-cost-preserving sketch via matrix approximation]\label{thm:matrixApprox}
If $S \in \R^{d \times m}$:
\begin{enumerate}
\item Is an $\frac{\epsilon}{3}$-subspace embedding (Definition \ref{def:subspace}) for $A_k$.
\item Satisfies $\frac{\epsilon}{6\sqrt{k}}$-approximate matrix multiplication (Definition \ref{def:amm}) for $\Ak$, $\Ak$.
\item Satisfies $\frac{\epsilon}{6\sqrt{k}}$-approximate matrix multiplication (Definition \ref{def:amm}) for $\Ak$, $V_k$.
\item Satisfies $\frac{\epsilon}{6}$-Frobenius norm preservation (Definition \ref{def:frob}) for $\Ak$.
\end{enumerate}
Then $\tilde A = AS$ is an $(\epsilon,0,k)$-projection-cost-preserving sketch of $A$.
\end{theorem}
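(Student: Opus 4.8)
The plan is to verify Definition~\ref{def:pcp} with $c=0$ directly: it suffices to prove that for every orthogonal projection $P$ of rank at most $k$,
\[
\bigl|\, \norm{AS - PAS}_F^2 - \norm{A - PA}_F^2 \,\bigr| \;\le\; \epsilon \, \norm{A-PA}_F^2 .
\]
The first step is the identity $\norm{M-PM}_F^2 = \tr\bigl((I-P)MM^T\bigr)$ (valid since $I-P$ is a symmetric idempotent), which turns the goal into a bound on $\bigl|\tr\bigl((I-P)(ASS^TA^T - AA^T)\bigr)\bigr|$. Next I apply the head--tail split of Definition~\ref{def:headTail} at $r=k$: writing $A = A_k + \Ak$ and using orthogonality of the singular subspaces ($A_k\Ak^T = 0$ and $\Ak^TA_k = 0$), one gets $AA^T = A_kA_k^T + \Ak\Ak^T$ and the decomposition
\[
ASS^TA^T - AA^T = \bigl(A_kSS^TA_k^T - A_kA_k^T\bigr) + \bigl(A_kSS^T\Ak^T + \Ak SS^TA_k^T\bigr) + \bigl(\Ak SS^T\Ak^T - \Ak\Ak^T\bigr)
\]
into a ``head'', a ``cross'', and a ``tail'' block. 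It suffices to bound the $(I-P)$-trace of each block by $\frac{\epsilon}{3}\norm{A-PA}_F^2$. Throughout I use two elementary facts: (i) $\norm{A-PA}_F^2 \ge \norm{A_k-PA_k}_F^2$ (from the orthogonal decomposition $\norm{A-PA}_F^2 = \norm{A_k-PA_k}_F^2 + \norm{\Ak-P\Ak}_F^2$, itself coming from $AA^T = A_kA_k^T+\Ak\Ak^T$) and $\norm{A-PA}_F^2 \ge \norm{\Ak}_F^2 = \norm{A-A_k}_F^2$ (since $PA$ has rank $\le k$ and $A_k$ is the optimal rank-$k$ approximation of $A$); (ii) for any matrix $M$, $|\tr(PM)| \le \norm{P}_F\norm{M}_F = \sqrt{\rank P}\,\norm{M}_F \le \sqrt{k}\,\norm{M}_F$ by Cauchy--Schwarz, which is where the $\sqrt k$ in hypotheses 2--3 is spent.

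For the head block, hypothesis~1 (subspace embedding for $A_k$) is exactly the Loewner sandwich $-\frac{\epsilon}{3}A_kA_k^T \preceq A_kSS^TA_k^T - A_kA_k^T \preceq \frac{\epsilon}{3}A_kA_k^T$; pairing it with $I-P\succeq 0$ gives $\bigl|\tr\bigl((I-P)(A_kSS^TA_k^T - A_kA_k^T)\bigr)\bigr| \le \frac{\epsilon}{3}\tr\bigl((I-P)A_kA_k^T\bigr) = \frac{\epsilon}{3}\norm{A_k-PA_k}_F^2 \le \frac{\epsilon}{3}\norm{A-PA}_F^2$. For the tail block I split its $(I-P)$-trace into the bare trace $\tr\bigl(\Ak SS^T\Ak^T - \Ak\Ak^T\bigr) = \norm{\Ak S}_F^2 - \norm{\Ak}_F^2$, handled by hypothesis~4 (Frobenius-norm preservation), and the $\tr(P\,\cdot)$ part, which by fact~(ii) is at most $\sqrt k\,\norm{\Ak SS^T\Ak^T - \Ak\Ak^T}_F$, handled by hypothesis~2 (approximate matrix multiplication for $\Ak,\Ak$); each piece costs $\frac{\epsilon}{6}\norm{\Ak}_F^2 \le \frac{\epsilon}{6}\norm{A-PA}_F^2$.

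The cross block is the step I expect to be the main obstacle, because the naive estimate of $\tr\bigl(PA_kSS^T\Ak^T\bigr)$ brings in a factor $\norm{A_k}_F$ that is not controlled by $\norm{A-PA}_F^2$, so the argument has to be arranged so that only the ``controllable'' quantities $\norm{A_k-PA_k}_F$ and $\norm{\Ak}_F$ appear. Three observations achieve this. First, the cross block is symmetric with trace $2\tr\bigl(A_kSS^T\Ak^T\bigr) = 2\tr\bigl(SS^T\Ak^TA_k\bigr) = 0$ (again $\Ak^TA_k = 0$), so its $(I-P)$-trace equals $-2\tr\bigl(PA_kSS^T\Ak^T\bigr)$. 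Second, substituting $PA_k = A_k - (A_k-PA_k)$ and using $\tr\bigl(A_kSS^T\Ak^T\bigr) = 0$ leaves $\tr\bigl((A_k-PA_k)SS^T\Ak^T\bigr)$ up to sign. Third, since $A_k = A_kV_kV_k^T$ the same identity holds for $A_k-PA_k$, so $V_kV_k^T$ can be pushed through the trace to rewrite this as $\tr\bigl((A_k-PA_k)V_k\,(\Ak SS^TV_k)^T\bigr)$, which by Cauchy--Schwarz is at most $\norm{A_k-PA_k}_F\,\norm{\Ak SS^TV_k}_F$. Finally $\Ak V_k = 0$ (orthogonality once more), so hypothesis~3 (approximate matrix multiplication for $\Ak,V_k$), together with $\norm{V_k}_F = \sqrt k$, forces $\norm{\Ak SS^TV_k}_F = \norm{\Ak V_k - \Ak SS^TV_k}_F \le \frac{\epsilon}{6}\norm{\Ak}_F$; hence the cross block contributes at most $\frac{\epsilon}{3}\norm{A_k-PA_k}_F\norm{\Ak}_F \le \frac{\epsilon}{3}\norm{A-PA}_F^2$. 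Summing the three bounds yields the claimed $\epsilon\,\norm{A-PA}_F^2$ estimate, so $AS$ is an $(\epsilon,0,k)$-projection-cost-preserving sketch of $A$; what is left is only the constant bookkeeping indicated above and a routine check that the block expansion produces no stray terms.
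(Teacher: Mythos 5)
Your proof is correct, and the head and tail bounds follow the paper's argument closely (same Pythagorean split $\tr(Y M Y) = \tr M - \tr(PMP)$, same use of hypothesis~4 on the $I$-piece and of hypothesis~2 with the $\sqrt{k}$ Cauchy--Schwarz on the $P$-piece; your $|\tr(PM)| \le \norm{P}_F\norm{M}_F$ is a cleaner packaging of the paper's eigenvalue computation). The genuine divergence is in the cross term, and it is a real simplification. The paper inserts $CC^{+}$, factors it as $C^{+/2}C^{+/2}$ with $C = AA^T$, and applies Cauchy--Schwarz to the pair $(YCC^{+/2},\, C^{+/2}A_kSS^T\Ak^T)$; unwinding the second factor through the SVD then produces $\norm{\Ak SS^T V_k}_F$, at which point hypothesis~3 finishes. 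You avoid the pseudoinverse device entirely: you use $\tr(A_kSS^T\Ak^T)=0$ to collapse the $(I-P)$-trace of the cross block to $2\tr\bigl((I-P)A_k\, SS^T\Ak^T\bigr)$, then push $V_kV_k^T$ through $(I-P)A_k$ and apply Cauchy--Schwarz to the pair $\bigl((I-P)A_kV_k,\ \Ak SS^T V_k\bigr)$, which lands you at the same quantity $\norm{\Ak SS^T V_k}_F$ but with the tighter first factor $\norm{(I-P)A_k}_F$ instead of $\norm{YA}_F$. Both routes spend hypothesis~3 in the same place and give the same final $\tfrac{\epsilon}{3}\norm{YA}_F^2$; yours is more elementary and makes transparent why only the ``controllable'' quantities $\norm{(I-P)A_k}_F$ and $\norm{\Ak}_F$ appear. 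One cosmetic note: Definition~\ref{def:amm} as printed squares the left-hand Frobenius norm, but both the paper's cross-term step and your use of hypotheses~2--3 implicitly read it unsquared; your accounting is consistent with how the paper actually applies the definition.
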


% where $S_{ij} = \pm \frac{1}{\sqrt{m}}$ independently at random. Then if $m = O(k\log (1/\delta)\eps^{-2})$, then $\tilde{A} = AS$ is $(\eps,k)$-PCP with probability $1-\delta$.
\begin{proof}
For any orthogonal projection matrix $P \in \R^{n \times n}$ with rank at most $k$, let $Y = I - P$, where $I$ is the $n \times n$ identity matrix. To prove the theorem, it suffices to show that:
\begin{align}\label{eq:initial}
\left | \norm{YA}_F^2 - \norm{YAS}_F^2 \right | \leq \epsilon\norm{YA}_F^2.
\end{align}
This immediately gives \eqref{eq:pcp} with $c = 0$. To prove \eqref{eq:initial} we will decompose the error into head and tail, terms following Definition \ref{def:headTail}. We write $A = A_k + \Ak$ and then rewrite \eqref{eq:initial} as:
\begin{align}\label{eq:initial2}
\left | \norm{Y (A_k + \Ak) }_F^2 - \norm{ Y (A_k + \Ak)S }_F^2 \right | \le \epsilon \norm{YA}_F^2.
\end{align}
Expanding out the left hand side, using that $\norm{M}_F^2 = \tr(MM^T)$ and that $\tr(Y A_k \Ak^T Y) = 0$ since $A_k$ and $\Ak$ have orthogonal row spans we can see that to show \eqref{eq:initial2} it suffices to show:
\begin{align}
\underbrace{\left | \tr(Y A_k A_k^T Y)- \tr(Y A_k SS^T A_k^T Y) \right  |}_{\text{head term}}& + \underbrace{\left | \tr(Y \Ak \Ak^T Y) - \tr(Y\Ak SS^T \Ak^T Y) \right |}_{\text{tail term}}\nonumber\\
&+ \underbrace{2 \left |\tr(YA_k SS^T \Ak^T Y) \right |}_{\text{cross term}} \le \epsilon \norm{YA}_F^2.\label{eq:initial3}
\end{align}
We now bound the three terms of \eqref{eq:initial3} separately.

%\medskip
%\spara{Head Term (via subspace embedding):}
%\medskip

\begin{claim}[Head Bound - via subspace embedding]\label{clm:head}
Under the assumptions of Theorem \ref{thm:matrixApprox}, for any orthogonal projection matrix $P \in \R^{n \times n}$ with rank at most $k$ and $Y = I - P$:
\begin{align}\label{eq:head}
| \tr(Y A_k A_k^T Y) - \tr(Y A_k SS^T A_k^T Y) | \le \frac{\epsilon}{3} \norm{YA}_F^2.
\end{align}
\end{claim}
\begin{proof}
By the assumption that $S$ is an $\frac{\epsilon}{3}$-subspace embedding for $A_k$ (Definition \ref{def:subspace}) we have:
\begin{align*}
| \tr(Y A_k A_k^T Y) - \tr(Y A_k SS^T A_k^T Y) | &= \left |\norm{YA_k}_F^2 - \norm{YA_kS}_F^2 \right  |\nonumber\\
&\le \frac{\epsilon}{3} \norm{YA_k}_F^2 \le \frac{\epsilon}{3} \norm{YA}_F^2.
\end{align*}
The second to last inequality follows from the subspace embedding property. In particular, let $(YA_k)_i$ and $(YA_kS)_i$ denote the $i^{th}$ rows of $YA_k$ and $YA_kS$ respectively. Then by the subspace embedding property we have $\left |\norm{(YA_k)_i}_2^2 - \norm{(YA_kS)_i}_2^2 \right | \le \frac{\epsilon}{3} \norm{(YA_k)_i}_2^2$.
\end{proof}

%\medskip
%\spara{Tail Term (via approximate matrix multiplication \& Frobenius norm preservation):}
%\medskip

\begin{claim}[Tail Bound - via approximate matrix multiplication \& Frobenius norm preservation]\label{clm:tail}
Under the assumptions of Theorem \ref{thm:matrixApprox}, for any orthogonal projection matrix $P \in \R^{n \times n}$ with rank at most $k$ and $Y = I - P$:
\begin{align}\label{eq:tail}
| \tr(Y \Ak \Ak^T Y) - \tr(Y \Ak SS^T \Ak^T Y) |  \le \frac{\epsilon}{3} \norm{YA}_F^2.
\end{align}
\end{claim}
\begin{proof}
We rewrite the tail term as:
\begin{align}
| \tr(Y \Ak \Ak^T Y) - \tr(Y \Ak SS^T \Ak^T Y) | &= \left |\norm{(Y \Ak}_F^2 - \norm{Y \Ak S}_F^2 \right |\nonumber\\
&= \left |\norm{(I-P) \Ak}_F^2 - \norm{(I-P) \Ak S}_F^2 \right |\nonumber\\
& = \left |(\norm{ \Ak}_F^2 - \norm{P \Ak}_F^2)  - (\norm{\Ak S}_F^2 - \norm{P\Ak S}_F^2) \right |\nonumber\\
& \le \left | \norm{\Ak}_F^2 - \norm{\Ak S}_F^2 \right | + \left | \norm{P \Ak}_F^2 - \norm{P \Ak S}_F^2\right |.\label{eq:tailSplit}
\end{align}
The third line follows from the Pythagorean theorem. By the assumption that $S$ satisfies $\frac{\epsilon}{6}$-Frobenius norm preservation (Definition \ref{def:frob}) for $\Ak$ we can bound the first term in \eqref{eq:tailSplit} by:
\begin{align}\label{eq:tail1}
\left | \norm{\Ak}_F^2 - \norm{\Ak S}_F^2 \right |  \le \frac{\epsilon}{6} \norm{\Ak}_F^2 \le \frac{\epsilon}{6} \norm{YA}_F^2,
\end{align}
where the last inequality follows from the fact that $A_k = \argmin_{rank-k\, C} \norm{A-C}_F^2$. Thus $\norm{\Ak}_F^2 = \norm{A-A_k}_F^2 \le \norm{A-PA}_F^2$ for any rank-$k$ projection $P$. We write the second term of \eqref{eq:tailSplit} as:
\begin{align*}
| \norm{ P \Ak }_F^2 - \norm{ P \Ak S }_F^2 | = | \tr(P[\Ak \Ak^T - \Ak SS^T \Ak^T]P)|.
\end{align*}
$P[\Ak \Ak^T - \Ak SS^T \Ak^T]P$ has rank at most $k$ since $P$ has rank at most $k$. Letting $\lambda_1,\ldots, \lambda_k$ denote its eigenvalues we have:
\begin{align}
| \norm{ P \Ak }_F^2 - \norm{ P \Ak S }_F^2 | &= | \tr(P[\Ak \Ak^T - \Ak SS^T \Ak^T]P)| \nonumber\\
&=  \left |\sum_{i=1}^k \lambda_i \right | \le \sum_{i=1}^k \left |\lambda_i \right | \le \sqrt{k} \cdot \left (\sum_{i=1}^k \lambda_i^2 \right ) \nonumber\\
 &= \sqrt{k} \cdot \norm{P[\Ak \Ak^T - \Ak SS^T \Ak^T]P}_F^2 \nonumber\\
&\le  \sqrt{k} \cdot \norm{\Ak \Ak^T - \Ak SS^T \Ak^T}_F^2.\label{eq:traceBound}
\end{align}
The last inequality follows since $P$ is a projection matrix and can only decrease the Frobenius norm. By our assumption that $S$ satisfies $\frac{\epsilon}{6\sqrt{k}}$-approximate matrix multiplication (Definition \ref{def:amm}) for $\Ak,\Ak$ we thus can bound:
\begin{align}\label{eq:tail2}
| \Vert P \Ak \Vert_F^2 - \Vert P \Ak S \Vert_F^2 | \le \frac{\epsilon}{6} \norm{\Ak}_F^2 \le \frac{\epsilon}{6} \norm{YA}_F^2,
\end{align}
where the second inequality again follows since $\Ak = A-A_k$ is the error of the best rank-$k$ approximation to $A$. Plugging \eqref{eq:tail1} and \eqref{eq:tail2} back into \eqref{eq:tailSplit} we have:
\begin{align*}
| \tr(Y \Ak \Ak^T Y) - \tr(Y \Ak SS^T \Ak^T Y) |  \le \frac{\epsilon}{3} \norm{YA}_F^2,
\end{align*}
which gives \eqref{eq:tail} and completes the claim.
\end{proof}

%\medskip
%\subsection{Cross Term (via approximate matrix multiplication):}
%\medskip

\begin{claim}[Cross Term Bound - via approximate matrix multiplication]\label{clm:cross}
Under the assumptions of Theorem \ref{thm:matrixApprox}, for any orthogonal projection matrix $P \in \R^{n \times n}$ with rank at most $k$ and $Y = I - P$:
\begin{align}\label{eq:cross}
2 \left |\tr(YA_k SS^T \Ak^T Y) \right | \le \frac{\epsilon}{3} \norm{YA}_F^2.
\end{align}
\end{claim}
\begin{proof}
Let $C = AA^T$ and let $C^+$ be its pseudoinverse. Writing $A$ in its SVD, $A= U\Sigma V^T$, we have $C^+ = U\Sigma^{-2} U^T$. We let $C^{+/2} = U \Sigma^{-1} U^T$. We can bound the cross term as:
\begin{align}
2 \left |\tr(YA_k SS^T \Ak^T Y) \right | &= 2| \tr(Y CC^+ A_k SS^T \Ak^T Y)| \tag{Since the columns of $A_k$ fall within the column span of $C$}\nonumber\\
&= 2|\tr(Y^2 CC^+ A_k SS^T \Ak^T)|\tag{By the cyclic property  of the trace}\nonumber \\
&=2 |\tr(Y CC^+ A_k SS^T \Ak^T)| \tag{Since $Y = I-P$ is an orthogonal projection so $Y^2 = Y$.}\nonumber\\
&=2 |\tr((Y CC^{+/2})(C^{+/2} A_k SS^T \Ak^T)| \nonumber\\
&\leq 2\sqrt{\tr(YCC^{+/2} C^{+/2}C Y)} \cdot \sqrt{\tr(\Ak SS^T A_k^T C^{+/2}C^{+/2} A_k SS^T \Ak^T)}.\label{eq:crossSplit}
\end{align}
The last inequality follows from Cauchy-Schwarz. The first term of \eqref{eq:crossSplit} can be bounded by:
\begin{align}
\sqrt{\tr(YCC^{+/2} C^{+/2}C Y)} = \sqrt{\tr(YCY)} = \sqrt{\tr(YAA^TY)} = \norm{YA}_F.\label{eq:cross1}
\end{align}
We bound the second term of \eqref{eq:crossSplit} by using the SVD to write $A_k = U_k \Sigma_k V_k^T$, where $U_k,V_k$ are as in Definition \ref{def:headTail} and $\Sigma_k \in \R^{k \times k}$ is the top left $k \times k$ submatrix of $\Sigma$. We have:
\begin{align}\label{eq:secondTermNorm}
\sqrt{\tr(\Ak SS^T A_k^T C^{+/2}C^{+/2} A_k SS^T \Ak^T)}
&= \sqrt{\tr(\Ak SS^T V_k \Sigma_k U_k^T U \Sigma^{-2} U^T U_k \Sigma_k V_k^T SS^T \Ak^T)} \nonumber\\
&= \sqrt{\tr(\Ak SS^T V_k V_k^T SS^T \Ak^T)}\nonumber\\
&= \Vert \Ak SS^T V_k \Vert_F = \Vert \Ak SS^T V_k - \Ak V_k \Vert_F.
\end{align}
The last line follows from the fact that the rows of $\Ak$ are orthogonal to the columns of $V_k$ and thus $\Ak V_k = 0$.
By the assumption that $S$ satisfies $\frac{\epsilon}{6\sqrt{k}}$-approximate matrix multiplication for $\Ak,V_k$, we thus have
\begin{align}
\sqrt{\tr(\Ak SS^T A_k^T C^{+/2}C^{+/2} A_k SS^T \Ak^T)} &\leq \frac{\epsilon}{4\sqrt{k}} \norm{\Ak}_F \cdot \norm{V_k}_F\nonumber\\
&\le \frac{\epsilon}{6\sqrt{k}} \norm{YA}_F \cdot \sqrt{k} = \frac{\epsilon}{6}\norm{YA}_F.\label{eq:cross2}
\end{align}
Plugging \eqref{eq:cross1} and \eqref{eq:cross2} back into \eqref{eq:crossSplit} we have:
\begin{align*}
2 \left |\tr(YA_k SS^T \Ak^T Y) \right | \le \frac{\epsilon}{3} \norm{YA}_F^2,
\end{align*}
which gives \eqref{eq:cross} and completes the claim.
\end{proof}

\medskip
\spara{Completing the Proof:}
\medskip

\noindent Finally, we combine the head, tail and cross term bounds of Claims \ref{clm:head}, \ref{clm:tail}, and \ref{clm:cross} to give:
\begin{align*}
{\left | \tr(Y A_k A_k^T Y)- \tr(Y A_k SS^T A_k^T Y) \right  |}& +{\left | \tr(Y \Ak \Ak^T Y) - \tr(Y\Ak SS^T \Ak^T Y) \right |}+{2 \left |\tr(YA_k SS^T \Ak^T Y) \right |}\\
&\le \frac{\epsilon}{3} \norm{YA}_F^2 + \frac{\epsilon}{3} \norm{YA}_F^2 + \frac{\epsilon}{3} \norm{YA}_F^2 = \epsilon \norm{YA}_F^2.
\end{align*}
This yields \eqref{eq:initial3}, which completes the proof of Theorem \ref{thm:matrixApprox}.
\end{proof}

\subsection{Constructions Satisfying Theorem \ref{thm:matrixApprox}}
Theorem \ref{thm:matrixApprox} can be used to prove that a number of constructions of $S$ give projection-cost-preserving sketches. A simple example is when $S$ is a random projection matrix. In fact, any projection matrix satisfying a certain \emph{Johnson-Lindenstrauss moment property} suffices.

\begin{definition}[$(\epsilon, \delta, \ell)$-JL moment property, \cite{KaneNelson:2014}]
	A matrix $S \in \R^{d\times m}$ satisfies the $(\epsilon, \delta, \ell)$-JL moment property if for any $x \in \R^d$ with $\|x\|_2 = 1$,
	\begin{align*}
		\mathbb{E}_{S} |\|x^T S\|_2^2 - 1|^\ell \leq \epsilon^\ell\cdot \delta.
	\end{align*}
\end{definition}

\begin{lemma}[Projection-cost-preservation from JL moment property]\label{lem:jlmoment}
	If $S$ satisfies the $(\frac{\epsilon}{6\sqrt{k}}, \delta, \ell)$-JL moment property and the $(\frac{\epsilon}{3}, \frac{\delta}{9^k}, \ell)$-JL moment property for any $\ell \geq 2$, then with probability $\geq 1-4\delta$, $\tilde A = AS$ is an $(\epsilon,0,k)$-projection-cost-preserving sketch of $A$.
\end{lemma}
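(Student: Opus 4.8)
The plan is to show that the hypotheses of Theorem~\ref{thm:matrixApprox} are satisfied with the required accuracy parameters, each with failure probability at most $\delta$, and then union bound over the four events. The key observation is that all four primitives in Theorem~\ref{thm:matrixApprox} — subspace embedding for $A_k$, approximate matrix multiplication for the pairs $(\Ak,\Ak)$ and $(\Ak, V_k)$, and Frobenius norm preservation for $\Ak$ — are known to follow from JL-type moment bounds. So the work is just to invoke the standard reductions and bookkeep the parameters.

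First I would handle the subspace embedding requirement. A matrix with the $(\frac{\epsilon}{3}, \frac{\delta}{9^k}, \ell)$-JL moment property is an $\frac{\epsilon}{3}$-subspace embedding for the $k$-dimensional column space of $A_k$ with probability $\geq 1 - \delta$: this is the classical net argument (cover the unit sphere of a $k$-dimensional subspace by a $\frac{1}{4}$-net of size at most $9^k$, apply a union bound over the net using Markov on the $\ell$-th moment, then extend from the net to the whole sphere at the cost of a constant factor absorbed into the bound). The $\frac{1}{9^k}$ shrinkage in $\delta$ is exactly what pays for the net. Second, for the two approximate matrix multiplication claims and the Frobenius norm preservation claim I would use the fact that the $(\frac{\epsilon}{6\sqrt{k}}, \delta, \ell)$-JL moment property implies, for fixed matrices $M, N$, that $\|MN^T - M SS^T N\|_F \leq \frac{\epsilon}{6\sqrt k}\|M\|_F\|N\|_F$ with probability $\geq 1-\delta$ (the standard second-moment / higher-moment AMM reduction from the JL moment property, e.g. as in \cite{KaneNelson:2014}); Frobenius norm preservation is the special case $N = M$ together with a triangle inequality, and also follows directly since $\|MS\|_F^2 = \sum_i \|M_i^T S\|_2^2$ and the moment property controls each summand. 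Note that the AMM definition in the excerpt is stated with accuracy $\frac{\epsilon}{6\sqrt k}$, matching the JL parameter directly, so no further loss is incurred. Each of these three events fails with probability at most $\delta$.

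Then I would union bound: the subspace embedding event fails with probability $\leq \delta$, and each of the three AMM/Frobenius events fails with probability $\leq \delta$, so all four hold simultaneously with probability $\geq 1 - 4\delta$. On that event, Theorem~\ref{thm:matrixApprox} applies verbatim and yields that $\tilde A = AS$ is an $(\epsilon, 0, k)$-projection-cost-preserving sketch of $A$.

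The main obstacle — really the only non-bookkeeping point — is making the net argument for the subspace embedding rigorous while keeping the dependence on $\delta$ exactly $\delta/9^k$ and the accuracy exactly $\epsilon/3$: one must be careful that extending the bound from the $\frac14$-net to the full unit sphere via the usual geometric-series / iteration argument does not blow up the accuracy beyond $\epsilon/3$ (it can be arranged so that a moment property at scale $\epsilon/3$ on a $\frac14$-net gives a subspace embedding at scale $\epsilon/3$, possibly after slightly adjusting net granularity, or one proves a subspace embedding at a somewhat coarser scale and notes that the constants in Theorem~\ref{thm:matrixApprox} have enough slack). I would cite the standard statement (e.g.\ \cite{KaneNelson:2014}) rather than reproving it, and simply remark that $|\text{net}| \leq 9^k$ is what dictates the $\delta/9^k$ in the hypothesis.
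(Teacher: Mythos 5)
Your approach matches the paper's: verify the four hypotheses of Theorem~\ref{thm:matrixApprox} via the standard JL-moment reductions (a $\frac14$-net of size $\le 9^k$ for the subspace embedding, Theorem~2.8 of \cite{woodruff2014sketching} for AMM, and a row-by-row Markov/Minkowski argument for Frobenius norm preservation, noting $\frac{\epsilon}{6\sqrt k}\le\frac{\epsilon}{6}$), and then union bound over four events. One small correction to a remark you make in passing: deriving Frobenius norm preservation as ``the special case $N=M$ of AMM together with a triangle inequality'' does not go through cleanly, since passing from $\|MM^T-MSS^TM^T\|_F$ to $|\tr(MM^T-MSS^TM^T)|=|\,\|M\|_F^2-\|MS\|_F^2\,|$ costs a factor of $\sqrt{\rank(M)}$, which is unbounded for $M=A_{\setminus k}$; your alternative derivation via $\|MS\|_F^2=\sum_i\|m_i^TS\|_2^2$ and Minkowski's inequality is the one that works and is exactly the paper's argument.
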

\begin{proof}
	It is well known that if $S$ satisfies the $(\frac{\epsilon}{3}, \frac{\delta}{9^k}, \ell)$-JL moment property for any $\ell >0$, then with probability $\ge 1-\delta$, $S$ is an $\frac{\epsilon}{3}$-subspace embedding for $A_k$ since $A_k$ has rank $k$. The proof follows from a net argument, as given in \cite{Sarlos:2006} or Theorem 2.1 of \cite{woodruff2014sketching}. 
	
	We also have from Theorem 2.8 in \cite{woodruff2014sketching} that if  $S$ satisfies the $(\frac{\epsilon}{6\sqrt{k}}, \delta, \ell)$-JL moment property for any $\ell \geq 2$, then  $S$ satisfies the $\frac{\epsilon}{6\sqrt{k}}$-approximate matrix multiplication property with probability $\ge 1-\delta$ for \emph{any} pair of matrices. 
	
	Finally, we claim that if $S$ satisfies the $(\frac{\epsilon}{6}, \delta, \ell)$-JL moment property for any $\ell > 0$, then $S$ satisfies the $\frac{\epsilon}{6}$-Frobenius norm preservation condition for \emph{any} matrix $M \in \R^{n\times d}$, with probability $1 - \delta$. In particular, let $m_1,\ldots, m_n$ denote the rows of $M$. Let $\tilde{\epsilon}$ denote $\epsilon/6$. We have:
	\begin{align*}
	\Pr\left[\left|\|MS\|_F^2 - \|M\|_F^2\right| > \tilde{\epsilon} \|M\|_F^2 \right] &\leq \tilde{\epsilon}^{-\ell}\|M\|_F^{-2\ell} \cdot \mathbb{E}\left[\left|\|MS\|_F^2 - \|M\|_F^2\right|^\ell\right] \\
	&= \tilde{\epsilon}^{-\ell}\|M\|_F^{-2\ell} \cdot \mathbb{E}\left[\left|\sum_{i=1}^n \|m_i^TS\|_2^2 - \|m_i\|_2^2\right|^\ell\right] \\
	&\leq \tilde{\epsilon}^{-\ell}\|M\|_F^{-2\ell} \cdot \left[\sum_{i=1}^n \left[\mathbb{E}\left|\|m_i^TS\|_2^2 - \|m_i\|_2^2\right|^\ell\right]^{1/\ell} \right]^\ell
	\end{align*}
	The last inequality follows from Minkowski's inequality. Then we use the JL-moment property:
	
	\begin{align*}
	\tilde{\epsilon}^{-\ell}\|M\|_F^{-2\ell}\left[\sum_{i=1}^n \left[\mathbb{E}\left|\| m_i^TS\|_2^2 - \|m_i\|_2^2\right|^\ell\right]^{1/\ell} \right]^\ell 
	&\leq \tilde{\epsilon}^{-\ell}\|M\|_F^{-2\ell}\cdot \left[\sum_{i=1}^n \left[\tilde{\epsilon}^\ell\cdot\delta\cdot\|m_i\|_2^{2\ell}\right]^{1/\ell} \right]^\ell \\ 
	&\leq \tilde{\epsilon}^{-\ell}\|M\|_F^{-2\ell}\cdot \left[\tilde{\epsilon}\cdot \delta^{1/\ell}\cdot \|M\|_F^2 \right]^\ell
	\leq \delta.
	\end{align*}
	
	Applying a union bound, we have that all four condition of Theorem \ref{thm:matrixApprox} hold with probability $\geq 1- 4\delta$, which completes the proof of Lemma \ref{lem:jlmoment}.
	\end{proof}
	
Many standard random projection matrices, including classic random Gaussian matrices, random Rademacher matrices, and fast and sparse JL transforms can be shown to satisfy the requirements of Lemma \ref{lem:jlmoment} with varying embedding dimensions. For example, we have the following
\begin{corollary}[Dense Random Projection]\label{cor:subGaussian}
Let $S \in \R^{d \times m}$ be a matrix with each entry set independently to $S_{i,j} = \frac{\mathcal{N}(0,1)}{\sqrt{m}}$ where $\mathcal{N}(0,1)$ is a standard normal random variable. If $m \ge c \cdot \frac{k + \log(1/\delta)}{\epsilon^2}$ for a sufficiently large universal constant $c$ then with probability $\ge 1-\delta$, $\tilde A = AS$ is an $(\epsilon,0,k)$-projection-cost-preserving sketch of $A$.
\end{corollary}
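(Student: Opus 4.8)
The plan is to deduce the corollary from Lemma~\ref{lem:jlmoment}. Since that lemma only yields failure probability $4\delta$, it suffices to show that a Gaussian $S$ with $m\ge c\,(k+\log(1/\delta))/\epsilon^2$ satisfies both the $(\tfrac{\epsilon}{6\sqrt k},\,\delta/4,\,\ell)$-JL moment property and the $(\tfrac{\epsilon}{3},\,\tfrac{\delta}{4\cdot 9^k},\,\ell)$-JL moment property for a common $\ell\ge 2$.

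The computation underlying both verifications is the same. For a Gaussian $S$ and any unit $x\in\R^d$, the entries of $x^TS\in\R^m$ are i.i.d.\ $\mathcal N(0,1/m)$, so $\|x^TS\|_2^2=\tfrac1m\sum_{j=1}^m z_j^2$ with $z_j$ i.i.d.\ standard normal, and hence $\|x^TS\|_2^2-1=\tfrac1m\sum_{j=1}^m(z_j^2-1)$ is a normalized sum of i.i.d.\ centered variables of variance $2$ with bounded sub-exponential constant. A standard moment bound for such sums (Rosenthal/Bernstein, or a direct expansion when $\ell$ is even) gives a universal constant $C$ with $\bigl(\E\,\bigl|\,\|x^TS\|_2^2-1\,\bigr|^\ell\bigr)^{1/\ell}\le C\bigl(\sqrt{\ell/m}+\ell/m\bigr)$ for every integer $\ell\ge 1$; hence $S$ has the $(\alpha,\beta,\ell)$-JL moment property whenever $C(\sqrt{\ell/m}+\ell/m)\le\alpha\beta^{1/\ell}$. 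I would then take $\ell=\Theta(k+\log(1/\delta))$, which makes both $(\delta/4)^{1/\ell}$ and $(\tfrac{\delta}{4\cdot 9^k})^{1/\ell}=(\delta/4)^{1/\ell}\,9^{-k/\ell}$ bounded below by a positive absolute constant, so the two requirements collapse to $C(\sqrt{\ell/m}+\ell/m)\lesssim\tfrac{\epsilon}{\sqrt k}$ and $C(\sqrt{\ell/m}+\ell/m)\lesssim\epsilon$. Since $\epsilon\le 1$ and $m\ge\ell$, the $\ell/m$ term is absorbed into $\sqrt{\ell/m}$, and both hold once $\sqrt{\ell/m}$ is a small enough multiple of $\epsilon/\sqrt k$; plugging $\ell=\Theta(k+\log(1/\delta))$ back in and tracking the hidden constants fixes $c$, and Lemma~\ref{lem:jlmoment} then shows $\tilde A=AS$ is an $(\epsilon,0,k)$-projection-cost-preserving sketch of $A$ with probability $\ge 1-\delta$.

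The only delicate point is the choice of $\ell$: it must be large enough that $9^{-k/\ell}$ is a constant (this comes from the $\tfrac{\delta}{9^k}$ failure budget in the subspace-embedding hypothesis of Lemma~\ref{lem:jlmoment}) yet small enough that $\sqrt{\ell/m}$ stays below $\epsilon/\sqrt k$ (the $\sqrt k$ coming from the approximate-matrix-multiplication hypotheses of Theorem~\ref{thm:matrixApprox}), and it is this smaller accuracy parameter $\tfrac{\epsilon}{6\sqrt k}$ that dictates the final embedding dimension. If one prefers not to route through Lemma~\ref{lem:jlmoment} as a black box, the alternative is to verify the four hypotheses of Theorem~\ref{thm:matrixApprox} directly for the Gaussian $S$: the head- and tail-term error matrices appearing in that proof reduce, in $A$'s singular basis, to deviations of a weighted Wishart matrix from its mean, which can be controlled by standard operator-norm concentration bounds for Wishart matrices.
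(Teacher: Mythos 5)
Your overall route is the paper's: reduce to Lemma \ref{lem:jlmoment}, note that for unit $x$ the vector $x^TS$ has i.i.d.\ $\mathcal N(0,1/m)$ entries so that $\|x^TS\|_2^2-1$ is a normalized sum of centered chi-squared variables, and invoke the standard sub-exponential moment bound $\bigl(\E|\|x^TS\|_2^2-1|^\ell\bigr)^{1/\ell}\le C(\sqrt{\ell/m}+\ell/m)$ to verify the two JL moment properties. That is exactly what the paper's (one-line) proof does.

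The gap is in the step ``plugging $\ell=\Theta(k+\log(1/\delta))$ back in and tracking the hidden constants fixes $c$.'' With $\ell=\Theta(k+\log(1/\delta))$ and $m=c(k+\log(1/\delta))/\epsilon^2$ you get $\sqrt{\ell/m}=\Theta(\epsilon/\sqrt{c})$, which is a constant multiple of $\epsilon$, \emph{not} of $\epsilon/\sqrt{k}$. The requirement $C\sqrt{\ell/m}\lesssim\epsilon/(6\sqrt{k})$ coming from the $(\frac{\epsilon}{6\sqrt k},\delta/4,\ell)$-JL moment property therefore forces $c\gtrsim k$, i.e.\ $m\gtrsim k(k+\log(1/\delta))/\epsilon^2$, a factor $k$ worse than claimed. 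Decoupling the two moment orders does not rescue the bound: the constraint $C\sqrt{\ell/m}\le\frac{\epsilon}{6\sqrt k}(\delta/4)^{1/\ell}$, optimized over $\ell$ (take $\ell\approx\log(1/\delta)$), still needs $m=\Omega(k\log(1/\delta)/\epsilon^2)$, which exceeds $c(k+\log(1/\delta))/\epsilon^2$ whenever $k$ and $\log(1/\delta)$ grow together. So your argument as written establishes the corollary only for constant $\delta$, or with the larger dimension $m=O(k\log(1/\delta)/\epsilon^2)$. In fairness, the paper's own proof asserts the same two JL moment properties at $m=c(k+\log(1/\delta))/\epsilon^2$ without tracking constants and is subject to the same caveat; obtaining the additive $k+\log(1/\delta)$ dependence genuinely requires replacing the generic JL-moment-to-approximate-matrix-multiplication reduction by an argument exploiting the structure of the particular products $\Ak\Ak^T$ and $\Ak V_k$ (e.g.\ spectral-norm or stable-rank approximate multiplication, in the spirit of the Wishart-concentration alternative you mention at the end, or the original analysis in \cite{cohen2015dimensionality}). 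You did correctly flag the $\epsilon/(6\sqrt k)$ parameter as the binding one; the issue is that it binds more tightly than your constant-tracking allows.
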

\begin{proof}
	When $S$ is a random Gaussian matrix, $\|x^TS\|_2^2$ is sum of independent Chi-squared random variables. We can thus directly apply a moment bound for sub-exponential random variables \cite{Wainwright:2019,Vershynin:2018} to establish that when $m \ge c \cdot \frac{k + \log(1/\delta)}{\epsilon^2}$, $S$ satisfies both the  $(\frac{\epsilon}{6\sqrt{k}}, \frac{\delta}{4}, \ell)$-JL moment property and the $(\frac{\epsilon}{3}, \frac{\delta/4}{9^k}, \ell)$-JL moment property. Applying Lemma \ref{lem:jlmoment} completes the proof. 
\end{proof}
For information on other random projection matrices that can be analyzed using Lemma \ref{lem:jlmoment}, see \cite{cohen2015dimensionality} and \cite{chowdhury2019structural}. Some of these matrices can be applied faster or stored in less space than the dense random projection of Corollary \ref{cor:subGaussian} because they are sparse or structured. 

Finally, we note that the conditions of Theorem \ref{thm:matrixApprox} can also be satisfied by a simple sampling scheme, which was proposed in \cite{cohen2015dimensionality} and also analyzed in \cite{chowdhury2019structural}:
%\begin{corollary}[Sparse Random Projection]\label{cor:sparse}
%\end{corollary}
\begin{corollary}[Leverage Score + Residual Sampling]\label{cor:sampling} Consider $A$ with SVD $A = U\Sigma V^T$.
For every  $i \in 1,\ldots n$ let $p_i = \frac{\norm{(U_k)_i}_2^2}{2k} + \frac{\norm{(A-A_k)_i}_2^2}{2\norm{A-A_k}_F^2}$. Let $S$ be a sampling matrix selecting $m$ columns of $A$ where each column of $S$ is set independently to $\frac{1}{\sqrt{mp_i}} e_i$ with probability  $p_i$, where $e_i$ is the $i^{th}$ standard basis vector. Then for $m \ge \frac{c k \log(k/\delta)}{\epsilon^2}$ for some universal constant $c$, with probability $\ge 1-\delta$, $S$ satisfies all four requirements of Theorem \ref{thm:matrixApprox} and so $\tilde A = AS$ is an $(\epsilon,0,k)$-projection-cost-preserving sketch of $A$.
\end{corollary}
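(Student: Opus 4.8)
The plan is to verify each of the four hypotheses of Theorem~\ref{thm:matrixApprox} separately, each with probability $\ge 1-\delta/4$, and then take a union bound. Everything follows from one structural fact about the sampling distribution: $\{p_i\}$ is the equal mixture $p_i=\tfrac12\ell_i^{\mathrm{lev}}+\tfrac12\ell_i^{\mathrm{res}}$ of the normalized leverage scores $\ell_i^{\mathrm{lev}}$ of $A_k$ (a probability distribution, since the leverage scores of a rank-$k$ matrix sum to $k$) and the normalized squared residual column norms $\ell_i^{\mathrm{res}}$ of $A-A_k$ (also a probability distribution). Hence $\sum_i p_i=1$, so the sampling model is well defined, and pointwise $p_i\ge\tfrac12\ell_i^{\mathrm{lev}}$ and $p_i\ge\tfrac12\ell_i^{\mathrm{res}}$. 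All the estimators appearing below are unbiased (because $\E[SS^{T}]=I$), and their variance only shrinks when the sampling distribution pointwise dominates a constant multiple of the ``natural'' one, so it suffices to analyze each condition as if $S$ were drawn from whichever of the two components governs it, losing only a factor of $2$.

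Condition~1 --- that $S$ is an $\tfrac{\epsilon}{3}$-subspace embedding for $A_k$ --- is governed by the leverage-score component. Since $A_k$ has rank $k$, this is the classical guarantee for leverage-score column sampling: oversampling columns according to $\ell^{\mathrm{lev}}$ with $m=O\!\big(k\log(k/\delta)/\epsilon^2\big)$ samples yields the subspace-embedding property of Definition~\ref{def:subspace} for $A_k$ with probability $\ge 1-\delta/4$, by standard matrix-concentration bounds (matrix Chernoff/Bernstein) applied to the relevant average of rank-one matrices; see \cite{woodruff2014sketching,cohen2015dimensionality,chowdhury2019structural}. This is the step that dictates the $\log(k/\delta)$ factor in $m$; the other three conditions need only $m=\Omega(k/\epsilon^2)$ before probability amplification.

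Conditions~2--4 are governed by the residual component, through the single consequence $\norm{(\Ak)^{(i)}}_2^2/p_i\le 2\norm{\Ak}_F^2$ of $p_i\ge\tfrac12\ell_i^{\mathrm{res}}$, where $(\Ak)^{(i)}$ denotes the $i$-th column of $\Ak=A-A_k$. For condition~4, $\norm{\Ak S}_F^2$ is a sum of $m$ independent terms each at most $2\norm{\Ak}_F^2/m$, with mean $\norm{\Ak}_F^2$ and variance $O(\norm{\Ak}_F^4/m)$, so a scalar Bernstein inequality gives $\big|\norm{\Ak S}_F^2-\norm{\Ak}_F^2\big|\le\tfrac{\epsilon}{6}\norm{\Ak}_F^2$ once $m=\Omega(\log(1/\delta)/\epsilon^2)$. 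For the two approximate-matrix-multiplication conditions, $\Ak SS^{T}\Ak^{T}$ and $\Ak SS^{T}V_k$ are unbiased estimators of $\Ak\Ak^{T}$ and of $\Ak V_k=0$ (the rows of $\Ak$ are orthogonal to the columns of $V_k$, so the product vanishes), and the same per-term bound gives $\E\norm{\Ak SS^{T}\Ak^{T}-\Ak\Ak^{T}}_F^2\le\tfrac1m\sum_i\norm{(\Ak)^{(i)}}_2^4/p_i\le\tfrac2m\norm{\Ak}_F^4$ and $\E\norm{\Ak SS^{T}V_k}_F^2\le\tfrac1m\sum_i\norm{(\Ak)^{(i)}}_2^2\norm{(V_k)_i}_2^2/p_i\le\tfrac2m\norm{\Ak}_F^2\norm{V_k}_F^2=\tfrac{2k}{m}\norm{\Ak}_F^2$, using $\norm{V_k}_F^2=k$. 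With $m=\Omega(k/\epsilon^2)$ both are at most $\big(\tfrac{\epsilon}{6\sqrt k}\big)^2\norm{\Ak}_F^2\cdot\norm{\Ak}_F^2$ and $\big(\tfrac{\epsilon}{6\sqrt k}\big)^2\norm{\Ak}_F^2\cdot\norm{V_k}_F^2$ respectively, i.e.\ the $\tfrac{\epsilon}{6\sqrt k}$-approximate matrix multiplication guarantee of Definition~\ref{def:amm} for the two required pairs; Markov makes this a constant-probability statement, and boosting to $1-\delta/4$ costs only the $\log(k/\delta)$ already present in $m$, via either the $\ell$-th moment AMM bound used in the proof of Lemma~\ref{lem:jlmoment} or matrix Bernstein on the rank-one decomposition. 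A union bound over the four events then completes the proof.

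I expect the part needing the most care is condition~3, AMM for the ``mismatched'' pair $(\Ak,V_k)$: neither component of $p_i$ is the textbook product-of-column-norms distribution for that product, so one must notice that the residual component alone suffices, precisely because $\Ak V_k=0$ removes the bias and $\sum_i\norm{(V_k)_i}_2^2=k$ keeps the leftover variance small --- this is the same quantity that controls the cross term in \eqref{eq:cross2}. The other point to watch is keeping the failure probability at $\log(k/\delta)$ rather than $1/\delta$ for conditions~2--4, which requires higher-moment or matrix-Bernstein concentration rather than plain Chebyshev; everything else is routine bookkeeping.
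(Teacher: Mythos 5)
Your proposal is correct and follows essentially the intended argument: the paper states this corollary without proof, deferring to Theorem 14 of \cite{cohen2015dimensionality} and to \cite{chowdhury2019structural}, and those proofs proceed exactly as you do --- matrix Chernoff via the leverage-score half of $p_i$ for the subspace-embedding condition, and second-moment (variance) bounds via the residual half of $p_i$ for the two approximate-matrix-multiplication conditions and the Frobenius condition, with higher-moment or Bernstein-type concentration to get the $\log(k/\delta)$ rather than $1/\delta$ dependence. Your observation that $\Ak V_k = 0$ is what makes the mismatched AMM pair work, and that $\norm{V_k}_F^2 = k$ controls the leftover variance, is precisely the right point of care.
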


\subsection{Proof Variants}

We briefly mention a few variants on the proof of Theorem \ref{thm:matrixApprox} that may be useful.

\medskip
\spara{Head-Tail Split Using an Approximate Basis:}
\medskip

Corollary \ref{cor:sampling} requires sampling by the leverage scores of the rank-$k$ subspace spanned by $A$'s true top $k$ singular vectors. This is to ensure that $S$ is an $\frac{\epsilon}{3}$-subspace embedding for $A_k$ (requirement (1) of Theorem \ref{thm:matrixApprox}.) It can be shown that the leverage scores of any approximate subspace (computed e.g. via an input sparsity time sketching method) can be used instead (see Theorem 14 of \cite{cohen2015dimensionality}, arXiv version). The proof requires splitting $A$ into `approximate head and tail terms' $AZZ^T$ and $A(I-ZZ^T)$ where $Z \in \R^{d \times O(k)}$ has orthonormal columns and where $A-AZZ^T$ is a near optimal low-rank approximation of $A$. See Lemma 10 of \cite{cohen2015dimensionality}, arXiv version.

\medskip
\spara{Allowing a Constant Error Term}
\medskip

Recall that Definition \ref{def:pcp} allows $\norm{\tilde A - P \tilde A}_F^2$ to approximate $\norm{A-PA}_F^2$ up to an additive constant $c$, which can depend on $A, \tilde A,\epsilon,k$, but not on $P$. The use of such a term is useful, e.g., when $\tilde A$ is obtained by taking a low-rank approximation of A and consistently  underestimates the cost $\norm{A-PA}_F^2$ (see Theorem 7,  8, 9 of \cite{cohen2015dimensionality}, arXiv version). 

Allowing a constant term can be useful also in loosening the requires for Theorem \ref{thm:matrixApprox}. For example, requirement 4, that $S$ preserves $\norm{\Ak}_F^2$ up to $\frac{\epsilon}{6}$ error, can be relaxed: we need only require that $\norm{\Ak S}_F^2 \le c_1 \norm{\Ak}_F^2$ for some constant $c_1 \ge 1$. As long as $S$ preserves the norm in expectation, such a constant error bound is easy to show via Markov's inequality. The Frobenius norm requirement is used in proving \eqref{eq:tail1}, which bounds the first term of \eqref{eq:tailSplit}. However, note that this term $|\norm{\Ak}_F^2 - \norm{\Ak S}_F^2|$ is a constant independent of $P$ and is bounded by $c_1 \norm{\Ak}_F^2$ as long as $\norm{\Ak S}_F^2 \le c_1\norm{\Ak}_F^2$. Setting $c = - |\norm{\Ak}_F^2 - \norm{\Ak S}_F^2|$, we can absorb this term into the projection-cost-preserving sketching bound. We can also see that this is sufficient to achieve a relative error bound in Claim \ref{prob:clr} as long as $\gamma$ is a less than a constant sufficiently close to $1$.
 See Lemma 7 of \cite{musco2017sublinear}, arXiv version for an example application of this technique.

\medskip
\spara{Head-Tail Split Using a Higher Dimension:}
\medskip

In some cases, we can relax the requirements of Theorem \ref{thm:matrixApprox} or give a stronger guarantee (e.g., a projection-cost-preserving sketch where error is measured in the spectral rather than the Frobenius norm) by splitting $A = A_r + \Ar$ for $r > k$ instead of $A = A_k + \Ak$. If we set $r = ck/\epsilon + k$ we have $\norm{\Ar}_2^2 \le \frac{\epsilon}{ck} \norm{\Ak}_F^2$, which can be a useful bound. See Lemmas 7 and 8 of \cite{musco2017sublinear}, arXiv version for an example of this technique. Also see Theorem \ref{thm:spectralApprox} in Section \ref{sec:spectralApprox}.

\section{Proof via Spectral Approximation}\label{sec:spectralApprox}

We now show an alternative strategy  to proving that a sketching matrix $S \in \R^{d \times m}$ yields $\tilde A = AS$ which is a projection-cost-preserving sketch. This proof strategy was presented in \cite{cohen2017input}.
\begin{theorem}[Projection-cost-preserving sketch via spectral approximation]\label{thm:spectralApprox}
If $S \in \R^{d \times m}$:
\begin{enumerate}
\item Satisfies $\left (1-\frac{\epsilon}{24}  \right ) AA^T - \lambda I \preceq AS S^T A^T \preceq \left (1+\frac{\epsilon}{24} \right  ) A A^T + \lambda I$ for $\lambda = \frac{\epsilon \cdot \norm{A - A_k}_F^2}{24k}$.\footnote{Equivalently, for any $x \in \R^n$, $(1-\epsilon) \norm{x^TA}_2^2 - \lambda \norm{x}_2^2 \le \norm{x^T \tilde A}_2^2 \le (1+\epsilon) \norm{x^TA}_2^2 + \lambda \norm{x}_2^2$.}
\item Satisfies $\frac{\epsilon}{12} \cdot \frac{\norm{A-A_k}_F^2}{\norm{A-A_p}_F^2}$-Frobenius norm preservation (Definition \ref{def:frob}) for $\Ap$ where $p$ is the largest integer such that $\sigma_{p}^2 \ge \frac{\norm{A-A_k}_F^2}{k}$.
\end{enumerate}
Then $\tilde A = AS$ is an $(\epsilon,0,k)$-projection-cost-preserving sketch of $A$.
\end{theorem}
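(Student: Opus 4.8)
The plan is to follow the same head--tail skeleton as the proof of Theorem~\ref{thm:matrixApprox}, but to take the split at index $p$ instead of $k$ and to bound every error term using the spectral inequality of assumption~(1) in place of subspace embedding / approximate matrix multiplication. Exactly as at \eqref{eq:initial}, it suffices to fix a rank-$\le k$ orthogonal projection $P$, write $Y = I-P$, and prove $\left|\norm{YA}_F^2 - \norm{YAS}_F^2\right| \le \epsilon\norm{YA}_F^2$. First I would record the consequences of the choice of $p$ (largest index with $\sigma_p^2 \ge \norm{A-A_k}_F^2/k$): with $\lambda = \frac{\epsilon}{24k}\norm{A-A_k}_F^2$ we have $\lambda/\sigma_p^2 \le \frac{\epsilon}{24}$, $\sigma_{p+1}^2 < \norm{A-A_k}_F^2/k$, $\lambda k = \frac{\epsilon}{24}\norm{A-A_k}_F^2$, and $p \le 2k$ (at most $k$ tail singular values can exceed $\norm{A-A_k}_F^2/k$); also $\norm{A-A_k}_F^2 \le \norm{YA}_F^2$ since $PA$ has rank $\le k$. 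Then, writing $A = A_p + \Ap$ via Definition~\ref{def:headTail} and using $A_p\Ap^T = 0$, I expand $\norm{YA}_F^2 - \norm{YAS}_F^2$ into a head term $\tr(Y(A_pA_p^T - A_pSS^TA_p^T)Y)$, a tail term $\tr(Y(\Ap\Ap^T - \Ap SS^T\Ap^T)Y)$, and a cross term $-2\tr(YA_pSS^T\Ap^TY)$, mirroring \eqref{eq:initial3}.

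For the head term, $A_p = U_pU_p^T A$, so $A_pSS^TA_p^T - A_pA_p^T = U_pU_p^T(ASS^TA^T - AA^T)U_pU_p^T$; conjugating assumption~(1) by $U_pU_p^T$ yields $-\frac{\epsilon}{24}A_pA_p^T - \lambda U_pU_p^T \preceq A_pSS^TA_p^T - A_pA_p^T \preceq \frac{\epsilon}{24}A_pA_p^T + \lambda U_pU_p^T$, and taking traces against $Y(\cdot)Y$ bounds the head term by $\frac{\epsilon}{24}\norm{YA_p}_F^2 + \lambda\norm{YU_p}_F^2 \le \frac{\epsilon}{24}\norm{YA}_F^2 + \lambda p \le \frac{\epsilon}{8}\norm{YA}_F^2$ (using $\norm{YA_p}_F^2 \le \norm{YA}_F^2$, $\norm{YU_p}_F^2 \le p$, and $\lambda p \le 2\lambda k$). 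For the tail term I would use the Pythagorean split of \eqref{eq:tailSplit}, writing it as $(\norm{\Ap}_F^2 - \norm{\Ap S}_F^2) - (\norm{P\Ap}_F^2 - \norm{P\Ap S}_F^2)$. The first difference is independent of $P$ and, since $\norm{\Ap}_F^2 = \norm{A-A_p}_F^2$, assumption~(2) and Definition~\ref{def:frob} bound it by $\frac{\epsilon}{12}\cdot\frac{\norm{A-A_k}_F^2}{\norm{A-A_p}_F^2}\cdot\norm{\Ap}_F^2 = \frac{\epsilon}{12}\norm{A-A_k}_F^2$. The second difference equals $\tr(P(\Ap\Ap^T - \Ap SS^T\Ap^T)P)$, and conjugating assumption~(1) by $I - U_pU_p^T$ bounds its absolute value by $\frac{\epsilon}{24}\norm{P\Ap}_F^2 + \lambda\tr(P(I-U_pU_p^T)P) \le \frac{\epsilon}{24}k\sigma_{p+1}^2 + \lambda k \le \frac{\epsilon}{12}\norm{A-A_k}_F^2$, where $\norm{P\Ap}_F^2 \le \rank(P)\norm{\Ap\Ap^T}_2 = k\sigma_{p+1}^2$. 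Hence the tail term is at most $\frac{\epsilon}{6}\norm{YA}_F^2$.

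The cross term is the real work. I would first reuse, verbatim with $k$ replaced by $p$, the Cauchy--Schwarz manipulation in the proof of Claim~\ref{clm:cross} (it only uses that the columns of $A_p$ lie in the column span of $AA^T$), obtaining $2\left|\tr(YA_pSS^T\Ap^TY)\right| \le 2\norm{YA}_F \cdot \norm{\Ap SS^T V_p}_F = 2\norm{YA}_F \cdot \norm{\Ap(SS^T-I)V_p}_F$, so it remains to show $\norm{\Ap(SS^T-I)V_p}_F \le \frac{\epsilon}{6}\norm{YA}_F$. The key observation is that $M := ASS^TA^T - AA^T$ has column and row space inside the column span of $A$, so in the basis of $A$'s left singular vectors it is a $\rank(A)\times\rank(A)$ symmetric matrix $\hat M$ with $-N \preceq \hat M \preceq N$ for the diagonal matrix $N = \frac{\epsilon}{24}\Sigma^2 + \lambda I$; a short SVD computation then gives $\norm{\Ap(SS^T-I)V_p}_F^2 = \sum_{i\le p}\sigma_i^{-2}\sum_{j>p}\hat M_{ji}^2$. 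For each $i\le p$ the Loewner bound gives $\left|z^T\hat M e_i\right| \le \sqrt{z^TNz}\sqrt{N_{ii}}$ for every unit $z$ supported on the tail coordinates, so $\sum_{j>p}\hat M_{ji}^2 \le \bigl(\frac{\epsilon}{24}\sigma_{p+1}^2 + \lambda\bigr)N_{ii}$; since $i \le p$ forces $\lambda/\sigma_i^2 \le \frac{\epsilon}{24}$ and the definition of $p$ forces $\frac{\epsilon}{24}\sigma_{p+1}^2 + \lambda \le \frac{\epsilon}{12k}\norm{A-A_k}_F^2$, each summand is $O(\epsilon^2\norm{A-A_k}_F^2/k)$ and, summing the $p \le 2k$ of them, $\norm{\Ap(SS^T-I)V_p}_F^2 = O(\epsilon^2\norm{A-A_k}_F^2) = O(\epsilon^2\norm{YA}_F^2)$ with a small enough constant to close the bound. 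Combining the three pieces gives $\left|\norm{YA}_F^2 - \norm{YAS}_F^2\right| \le \bigl(\frac18+\frac16+\frac13\bigr)\epsilon\norm{YA}_F^2 < \epsilon\norm{YA}_F^2$, which is \eqref{eq:pcp} with $c = 0$.

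I expect the cross-term estimate to be the only genuine obstacle. The naive route --- bounding $\norm{\Ap(SS^T-I)V_p}_F$ by pushing a Frobenius norm through $SS^T$ --- loses the factor $\epsilon$ entirely, while bounding it through the entrywise estimates $|\hat M_{ji}| \le \sqrt{N_{jj}N_{ii}}$ summed over all tail indices $j$ produces a term proportional to $\lambda\,(\rank(A)-p)$, which is far too large. Treating the tail-by-head block of $\hat M$ as a single object and using $\|N_J\|_2 = \frac{\epsilon}{24}\sigma_{p+1}^2 + \lambda$ is what restores dimension-independence, and it is the one place in the argument where the precise definition of $p$, not merely $p = \Theta(k)$, is needed.
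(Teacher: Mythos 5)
Your proof is correct and uses the same head/tail/cross skeleton at index $p$ as the paper's proof, with the same tail bound. Two pieces differ in a worthwhile way.

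For the head term, the paper observes that $y = U_pU_p^Tx$ lives in the span of singular directions with $\sigma_i^2 \ge \norm{\Ak}_F^2/k$, so the additive $\lambda\norm{y}_2^2$ slack can be \emph{absorbed as relative error}: $\lambda \norm{y}_2^2 \le \frac{\epsilon}{24}\,y^TAA^Ty$, giving a $(1\pm\frac{\epsilon}{12})$-multiplicative bound on $x^TA_pSS^TA_p^Tx$ and thus $\frac{\epsilon}{12}\norm{YA}_F^2$ for the head. You instead bound the slack additively as $\lambda\norm{YU_p}_F^2 \le \lambda p \le 2\lambda k = \frac{\epsilon}{12}\norm{\Ak}_F^2$, landing at $\frac{\epsilon}{8}\norm{YA}_F^2$. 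Both close the budget; the paper's version is tighter and, more importantly, the multiplicative form \eqref{eq:headSpectral} is reused inside its cross-term argument, whereas your route does not need it.

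The genuinely different ingredient is the cross-term bound. The paper constructs the test vector $m = \sigma_i^{-1}u_i + \frac{\sqrt{k}}{\norm{\Ak}_F}p_i$, plugs it into the upper spectral inequality, uses the lower bounds \eqref{camTest1}--\eqref{camTest2} to cancel the diagonal blocks, and reads off a bound on the cross product $p_i^TASS^TA^Tu_i$. You instead pass to the singular basis, write $M = ASS^TA^T - AA^T = U\hat MU^T$ with $-N \preceq \hat M \preceq N$ for diagonal $N = \frac{\epsilon}{24}\Sigma^2 + \lambda I$, and apply the matrix Cauchy--Schwarz consequence of a Loewner sandwich, $|u^T\hat M v| \le \sqrt{u^TNu}\,\sqrt{v^TNv}$, column by column. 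The two arguments are morally equivalent (the paper's test-vector trick is a hand-rolled Cauchy--Schwarz), but yours is cleaner once one trusts the Loewner Cauchy--Schwarz fact, it makes it immediately transparent where the definition of $p$ enters ($\lambda/\sigma_i^2 \le \frac{\epsilon}{24}$ on the head side and $\frac{\epsilon}{24}\sigma_{p+1}^2 + \lambda \le \frac{\epsilon}{12k}\norm{\Ak}_F^2$ on the tail side), and it sidesteps the paper's need to separately establish \eqref{eq:headSpectral} and \eqref{eq:totalAddBound} before the cross term. The price is appealing to the fact that $-N \preceq \hat M \preceq N$ with $N \succ 0$ implies $|u^T\hat M v| \le \sqrt{u^TNu}\sqrt{v^TNv}$; this follows by noting $\norm{N^{-1/2}\hat M N^{-1/2}}_2 \le 1$, but in a write-up you should state it explicitly since it is the load-bearing lemma. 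Your constants verify: $\norm{\Ap(SS^T-I)V_p}_F^2 \le \bigl(\frac{\epsilon}{24}\sigma_{p+1}^2+\lambda\bigr)\sum_{i\le p}\bigl(\frac{\epsilon}{24}+\lambda\sigma_i^{-2}\bigr) \le \frac{\epsilon\norm{\Ak}_F^2}{12k}\cdot\frac{2k\epsilon}{12} = \frac{\epsilon^2}{72}\norm{\Ak}_F^2$, giving a cross term of $\frac{2\epsilon}{\sqrt{72}}\norm{YA}_F^2 < \frac{\epsilon}{4}\norm{YA}_F^2$, and $\frac18+\frac16+\frac14 < 1$.
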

\begin{proof}
As in the proof of Theorem \ref{thm:matrixApprox}, letting $P \in \R^{n \times n}$ be any orthogonal projection matrix with rank at most $k$ and $Y  = I-P$, to prove the theorem it suffices to show \eqref{eq:initial}. Following Definition \ref{def:headTail} we decompose $A = A_p + \Ap$ where $p$ is the largest integer such that $\sigma_{p}^2 \ge \frac{\norm{A-A_k}_F^2}{k}$. Note that we always have $p \le 2k$ since $\norm{A-A_k}_F^2 = \sum_{i=k+1}^{\rank(A)} \sigma_{i}^2 \ge \sum_{i=k+1}^{2k} \sigma_{i}^2 \ge k \cdot \sigma_{2k}^2$. Using this decomposition, we can see that to prove the theorem it suffices to prove an analogous bound to \eqref{eq:initial3}:
\begin{align}
\underbrace{\left | \tr(Y A_p A_p^T Y)- \tr(Y A_p SS^T A_p^T Y) \right  |}_{\text{head term}}& + \underbrace{\left | \tr(Y \Ap \Ap^T Y) - \tr(Y\Ap SS^T \Ap^T Y) \right |}_{\text{tail term}}\nonumber\\
&+ \underbrace{2 \left |\tr(YA_p SS^T \Ap^T Y) \right |}_{\text{cross term}} \le \epsilon \norm{YA}_F^2.\label{eq:initial4}
\end{align}

We now proceed as in the proof of Theorem \ref{thm:matrixApprox}, bounding the three terms of \eqref{eq:initial4} separately.
\begin{claim}[Head Bound]\label{clm:head2}
Under the assumptions of Theorem \ref{thm:spectralApprox}, for any orthogonal projection matrix $P \in \R^{n \times n}$ with rank at most $k$ and $Y = I - P$:
\begin{align}\label{eq:head2}
| \tr(Y A_p A_p^T Y) - \tr(Y A_p SS^T A_p^T Y) | \le \frac{\epsilon}{12} \norm{YA}_F^2.
\end{align}
\end{claim}
\begin{proof}
Since for any $x \in \R^n$ we can write $x^T  A_p A_p^T x = y^T A A^T y$ for $y = U_pU_p^T x$, by our spectral error assumption we have:
\begin{align}
\label{eq:raw_top_bound}
\left (1-\frac{\epsilon}{24}  \right ){x}^T{A}_p{A}_p^T{x}   - \frac{\epsilon \|\Ak\|_F^2}{24k}\norm{y}_2^2
\leq{x}^T{A_p S}S^TA_p^T{x}
\leq \left (1+\frac{\epsilon}{24}  \right ){x}^T{A}_p{A}_p^T{x}  + \frac{\epsilon\|\Ak\|_F^2}{24k}\norm{y}_2^2.
\end{align}
By our choice of $p$, $y$ is orthogonal to all singular directions of ${A}$ except those with squared singular value greater than or equal to $\frac{\norm{\Ak}_F^2}{k}$. It follows that 
\begin{align*}
{x}^T{A}_p{A}_p^T{x} ={y}^T{A}{A}^T{y} \geq \frac{\|\Ak\|_F^2}{k}\cdot \norm{y}_2^2,
\end{align*}
and plugging back into \eqref{eq:raw_top_bound}, that  for any $x \in \R^n$:
\begin{align}\label{eq:headSpectral}
\left (1-\frac{\epsilon}{12}\right ) {x}^T{A}_p{A}_p^T{x} \le {x}^T{A_p S}S^TA_p^T{x} \le\left (1+\frac{\epsilon}{12}\right ) {x}^T{A}_p{A}_p^T{x}.
\end{align}
This yields \eqref{eq:head2}, completing the claim.
\end{proof}

\begin{claim}[Tail Bound]\label{clm:tail22}
Under the assumptions of Theorem \ref{thm:spectralApprox}, for any orthogonal projection matrix $P \in \R^{n \times n}$ with rank at most $k$ and $Y = I - P$:
\begin{align}\label{eq:tail22}
| \tr(Y \Ap \Ap^T Y) - \tr(Y \Ap SS^T \Ap^T Y) |  \le \frac{\epsilon}{6} \norm{YA}_F^2.
\end{align}
\end{claim}
\begin{proof}
As in the proof of Theorem \ref{thm:matrixApprox} (see equation \eqref{eq:tailSplit}) we bound the tail term as:
\begin{align}
| \tr(Y \Ap \Ap^T Y) - \tr(Y \Ap SS^T \Ap^T Y) | & \le \left | \norm{\Ap}_F^2 - \norm{\Ap S}_F^2 \right | + \left | \norm{P \Ap}_F^2 - \norm{P \Ap S}_F^2\right |.\label{eq:tailSplit2}
\end{align}
By the assumption that $S$ satisfies $\frac{\epsilon}{12} \cdot \frac{\norm{\Ak}_F^2}{\norm{\Ap}_F^2}$-Frobenius norm preservation (Definition \ref{def:frob}) for $\Ap$ we can bound the first term in \eqref{eq:tailSplit2} by:
\begin{align}\label{eq:tail12}
\left | \norm{\Ap}_F^2 - \norm{\Ap S}_F^2 \right |  \le \frac{\epsilon}{12} \norm{\Ak}_F^2.
\end{align}
%The second to last inequality follows from that fact that $\norm{\Ap}_F^2 \le 2 \norm{\Ak}_F^2$. If $p \ge k$ this is true immediately. Otherwise, if $p < k$ it follows since:
%\begin{align*}
%\norm{\Ap}_F^2 - \norm{\Ak}_F^2 = \sum_{i=p+1}^{k} \sigma_i^2 \le k \cdot \sigma_{p+1}^2 \le \norm{\Ak}_F^2
%\end{align*}
%and so $\norm{\Ap}_F^2 \le 2\norm{\Ak}_F^2$.
We next bound the second term of \eqref{eq:tailSplit2}, $\left | \norm{P \Ap}_F^2 - \norm{P \Ap S}_F^2\right |$. For any $x \in \R^n$ we can write $x^T \Ap \Ap  x = y^T A A^T  y$ where $y  = (I-U_pU_p^T) x$. By  our spectral error assumption we have:
\begin{align*}
%\label{eq:raw_top_bound2}
\left (1-\frac{\epsilon}{24}  \right ){x}^T \Ap \Ap^T{x}   - \frac{\epsilon \|\Ak\|_F^2}{24k}\norm{y}_2^2
\leq{x}^T{\Ap S}S^T\Ap^T{x}
\leq \left (1+\frac{\epsilon}{24}  \right ){x}^T\Ap\Ap^T{x}  + \frac{\epsilon\|\Ak\|_F^2}{24k}\norm{y}_2^2.
\end{align*}
Noting that $\norm{y}_2^2 \le \norm{x}_2^2$ and that by definition of $p$, $\norm{\Ap}_2^2 \le \frac{\norm{\Ak}_F^2}{k}$ and thus $x^T \Ap \Ap^T x \le \norm{x}_2^2 \cdot \frac{\norm{\Ak}_F^2}{k}$ we obtain:
\begin{align}\label{eq:totalAddBound}
\left |x^T (\Ap \Ap^T - \Ap SS^T \Ap^T)x \right | \le \frac{\epsilon \cdot \norm{x}_2^2 \cdot \norm{\Ak}_F^2}{12k}
\end{align}
Finally, since $P$ is a rank-$k$ projection matrix, we can write $P = ZZ^T$ where $Z \in \R^{n \times k}$ has orthonormal columns $z_1,...,z_k$: Using \eqref{eq:totalAddBound} we can bound:
\begin{align}\label{eq:secondTail2}
| \norm{ P \Ak }_F^2 - \norm{ P \Ak S }_F^2 | &= | \tr(P[\Ak \Ak^T - \Ak SS^T \Ak^T]P)| \nonumber \\ &= | \tr(ZZ^T[\Ak \Ak^T - \Ak SS^T \Ak^T]ZZ^T)| \nonumber \\ &=  | \tr(Z^T[\Ak \Ak^T - \Ak SS^T \Ak^T]Z)\nonumber\\
&\le \sum_{i=1}^k \frac{\epsilon \cdot \norm{z_i}_2^2 \cdot \norm{\Ak}_F^2}{12k}= \frac{\epsilon}{12} \cdot \norm{\Ak}_F^2,
\end{align}
where the second line follows from the cyclic property of trace and the fact that $Z^TZ = I$. Plugging \eqref{eq:tail12} and \eqref{eq:secondTail2} back into \eqref{eq:tailSplit2} gives:
\begin{align*}
| \tr(Y \Ap \Ap^T Y) - \tr(Y \Ap SS^T \Ap^T Y) | \le \frac{\epsilon}{12} \norm{YA}_F^2  + \frac{\epsilon}{12} \norm{\Ak}_F^2 \le \frac{\epsilon}{6} \norm{YA}_F^2,
\end{align*}
which gives \eqref{eq:tail22}, completing the claim.
\end{proof}

\begin{claim}[Cross Term Bound]\label{clm:cross2}
Under the assumptions of Theorem \ref{thm:spectralApprox}, for any orthogonal projection matrix $P \in \R^{n \times n}$ with rank at most $k$ and $Y = I - P$:
\begin{align}\label{eq:cross2s}
2 \left |\tr(YA_p SS^T \Ap^T Y) \right | \le \frac{\epsilon}{2} \norm{YA}_F^2.
\end{align}
\end{claim}
\begin{proof}
We follow equation \eqref{eq:crossSplit} in the proof of Claim \ref{clm:cross}, writing $C = AA^T$ and bounding
\begin{align}
2 \left |\tr(YA_p SS^T \Ap^T Y) \right | \le 2\sqrt{\tr(YCC^{+/2} C^{+/2}C Y)} \cdot \sqrt{\tr(\Ap SS^T A_p^T C^{+/2}C^{+/2} A_p SS^T \Ap^T)}.\label{eq:crossSplit2}
\end{align}
As in \eqref{eq:cross1} we have: 
\begin{align}
\sqrt{\tr(YCC^{+/2} C^{+/2}C Y)} = \norm{YA}_F\label{eq:cross12}.
\end{align}
It remains to bound the second term in the product. Following \eqref{eq:secondTermNorm} we have:
%\begin{align}
%\sqrt{\tr(\Ap SS^T A_p^T C^{+/2}C^{+/2} A_p SS^T \Ap^T)} &= \norm{\Ap SS^T V_p - \Ap V_p}_F\nonumber\\
%&\le \norm{\Ap SS^T - \Ap}_2 \cdot \norm{V_p}_F\nonumber\\
%&=  \norm{\Ap SS^T - \Ap}_2 \cdot \sqrt{p}\label{eq:submult}.
%\end{align}
%We have already bounded $\norm{\Ap SS^T - \Ap}_2$ in \eqref{eq:totalAddBound} by $\sqrt{\frac{\epsilon \norm{\Ak}_F^2}{3k}}$. Plugging back into  \eqref{eq:submult} gives:
%\begin{align}\label{eq:secondTerm2}
%\sqrt{\tr(\Ap SS^T A_p^T C^{+/2}C^{+/2} A_p SS^T \Ap^T)} \le \sqrt{\frac{\epsilon \norm{\Ak}_F^2}{3k}} \cdot \sqrt{p}
%\end{align}
%
Let $\Sigma_p \in \R^{p \times p}$ be the top left $p \times p$ submatrix of $\Sigma$ (with diagonal entries $\sigma_1,...,\sigma_p$). We have $C^{+/2} = U \Sigma^{-1} U^T$ and recalling that $A_p = U_p U_p^T A$ can write: 
\begin{align}
\label{eq:sum_sep}
\tr(\Ap SS^T A_p^T C^{+/2}C^{+/2} A_p SS^T \Ap^T) &= \|\Ap SS^T A_p^T C^{+/2}\|_F^2\nonumber\\
&= \norm{\Ap SS^T A^T U_p U_p^T U \Sigma^{-1} U^T}_F^2 \nonumber\\
&= \norm{\Ap SS^T A^T U_p \Sigma_p^{-1}}_F^2 \nonumber\\
&= \sum_{i=1}^p \|\Ap SS^T A^T{u}_{i}\|_2^2\cdot \sigma_{i}^{-2}.
\end{align}
We prove that the summand is small for every $i$. Take $p_i$ to be a unit vector:
$$p_i \eqdef \frac{1}{\| \Ap SS^TA{u}_{i}\|_2} \cdot \Ap SS^TA^T u_i.$$
Note that $p_i$ falls within the column span of $\Ap$ and thus $p_i^T A = p_i^T \Ap$. Analogously, $u_i^T A = u_i^T A_p$. Using the first fact we can write:
\begin{align}\label{eq:squaredDot}
\|\Ap SS^T A^T{u}_{i}\|_2^2 = (p_i^T \Ap SS^TA^T u_i)^2 = (p_i^T A SS^TA^T u_i)^2.
\end{align}
Now, suppose we construct the vector ${m} = \left(\sigma_i^{-1}{u}_i + \frac{\sqrt{k}}{\|\Ak\|_F}{p}_i\right)$. By  our spectral error assumption we have that:
\begin{align*}
m^T A SS^T A^Tm &\leq \left (1+\frac{\epsilon}{24}\right ) m^T AA^T m + \frac{\epsilon \norm{\Ak}_F^2}{24k} \norm{m}_2^2,
\end{align*}
which expands to give:
\begin{align}
\label{cross_term_mid}
&\sigma_i^{-2}{u}_i^T ASS^TA^T{u}_i + \frac{k}{\|\Ak \|^2_F}{p}_i^TASS^TA{p}_i + \frac{2\sqrt{k}}{\sigma_i \|\Ak\|_F}{p}_i^TASS^T A^T u_i \nonumber \\
&\leq \left (1+\frac{\epsilon}{24} \right )\sigma_i^{-2}{u}_i^T{A}{A}^T{u}_i + \left (1+\frac{\epsilon}{24}\right )\frac{k}{\norm{\Ak}^2_F} p_i^T AA^T p_i  + \frac{\epsilon\|\Ak\|_F^2}{24k} \norm{m}_2^2.
\end{align}
%\label{eq:cross_term_bound}
There is no cross term on the right side since $p_i^TAA^T u_i = p_i^T \Ap A_p^T u_i = 0$.
From \eqref{eq:headSpectral}, and using that  $u_i^T A = u_i^T A_p$, we have:
\begin{align}\label{camTest1}
{u}_i^TASS^T A^T {u}_i \geq \left (1-\frac{\epsilon}{12}\right ){u}_i^TAA^T u_i = \left (1-\frac{\epsilon}{12}\right ) \sigma_i^2.
\end{align}
Additionally, from \eqref{eq:totalAddBound}, the fact that $p_i^T A = p_i^T \Ap$, and that $p_i$ is a unit vector:
\begin{align}\label{camTest2}
p_i^T ASS^T A p_i \ge p_i^T  AA^T p_i - \frac{\epsilon \norm{\Ak}_F^2}{12k}.
\end{align}
Plugging \eqref{camTest1} and \eqref{camTest2} back into \eqref{cross_term_mid} gives:
\begin{align}
\left (1-\frac{\epsilon}{12}\right ) &+ \frac{k}{\|\Ak\|^2_F}{p}_i^T{A}{A}^T{p}_i - \frac{\epsilon}{12} + \frac{2\sqrt{k}}{\sigma_i \|\Ak\|_F}{p}_i^T ASS^T A^T{u}_i \nonumber\\
&\leq \left (1+\frac{\epsilon}{24}\right)+ \left (1+\frac{\epsilon}{24}\right) \frac{k}{\|\Ak\|^2_F}{p}_i^T{A}{A}^T{p}_i + \frac{\epsilon\|\Ak\|_F^2}{24k} \norm{m}_2^2. \label{eq:cross_term_bound_2}
\end{align}
Noting that ${p}_i^T{A}{A}^T{p}_i \leq \frac{\|\Ak\|_F^2}{k}$ since ${p}_i$ lies in the column span of $\Ap$, rearranging \eqref{eq:cross_term_bound_2} gives:
\begin{align}\label{camtest3}
\frac{2\sqrt{k}}{\sigma_i \|\Ak\|_F}{p}_i^TASS^TA^T{u}_i \leq \frac{\epsilon}{4} + \frac{\epsilon\|\Ak \|_F^2}{24 k} \cdot \norm{m}_2^2 \leq \frac{\epsilon}{3}.
\end{align}
The second inequality above follows from the fact that for $i \le p$, $\sigma_i^{-2} \leq \frac{{k}}{\|\Ak\|_F^2}$ and that $u_i^T  p_i = 0$ so $\|{m}\|_2^2 = \sigma_i^{-2} \norm{u_i}_2^2 + \frac{k}{\norm{\Ak}_F^2} \norm{p_i} \leq \frac{2 k}{\|\Ak\|_F}.$ Squaring \eqref{camtest3} gives
\begin{align*}
({p}_i^TASS^TA^T{u}_i)^2 \leq \frac{\epsilon^2}{36}\cdot \frac{\sigma_i^2 \|\Ak\|^2_F}{k}.
\end{align*}
Plugging into \eqref{eq:sum_sep} using \eqref{eq:squaredDot} and that $p \le 2k$ then gives:
\begin{align}\label{crossSecondSpectral}
\tr(\Ap SS^T A_p^T C^{+/2}C^{+/2} A_p SS^T \Ap^T) \leq \sum_{i=1}^p \frac{\epsilon^2}{36}\cdot \frac{\sigma_i^2 \|\Ak\|^2_F}{k} \cdot \sigma_{i}^{-2} \leq \frac{\epsilon^2}{18} \|\Ak\|^2_F
\end{align}
Finally, plugging \eqref{eq:cross12} and \eqref{crossSecondSpectral} back into \eqref{eq:crossSplit2} gives:
\begin{align}
2 \left |\tr(YA_p SS^T \Ap^T Y) \right | &\leq 2 \norm{YA}_F \cdot \sqrt{\frac{1}{18}} \cdot \epsilon \norm{\Ak}_F \le \frac{\epsilon}{2} \norm{YA}_F^2,
\end{align}
which gives \eqref{eq:cross2s}, completing the claim.
\end{proof}

\medskip
\spara{Completing the Proof:}
\medskip

\noindent Finally, we combine the head, tail and cross term bounds of Claims \ref{clm:head2}, \ref{clm:tail22}, and \ref{clm:cross2} to give:
\begin{align}
{\left | \tr(Y A_p A_p^T Y)- \tr(Y A_p SS^T A_p^T Y) \right  |}& +{\left | \tr(Y \Ap \Ap^T Y) - \tr(Y\Ap SS^T \Ap^T Y) \right |} +{2 \left |\tr(YA_p SS^T \Ap^T Y) \right |}\nonumber \\
&\le \frac{\epsilon}{12} \norm{YA}_F^2 + \frac{\epsilon}{6} \norm{YA}_F^2 + \frac{\epsilon}{2} \norm{YA}_F^2 \le \epsilon  \norm{YA}_F^2.
\end{align}
This yields \eqref{eq:initial3}, completing the theorem.
\end{proof}

\subsection{Constructions Satisfying Theorem \ref{thm:spectralApprox}}
The spectral approximation and Frobenius norm preservation requirements of Theorem \ref{thm:spectralApprox} are satisfied by many sketching methods. The are particularly natural in proving projection-cost-preserving sketch properties for column selection methods, two of which we use as examples below.

\begin{corollary}[Ridge Leverage Score Sampling]\label{cor:ridge}
Let $a_i \in \R^n$ be the $i^{th}$ column of $A$. The $i^{th}$ $\lambda$-ridge leverage score of $A$ is given by $$\tau_i(A) \eqdef {a}_i^T (AA^T + \lambda I)^{-1} a_i.$$
For every $i$, let $\tilde \tau_i \ge \tau_i(A)$ be an overestimate for the $i^{th}$ $\lambda$-ridge leverage score with $\lambda = \frac{\norm{A-A_k}_F^2}{k}$. Let $p_i = \frac{\tilde \tau_i}{\sum_{i=1}^d \tilde \tau_i}$ and let $t = \frac{c \log(k/\delta)}{\epsilon^2} \sum_{i=1}^d \tilde \tau_i$ for any $\epsilon,\delta \in (0,1)$ and some sufficiently large constant $c$. Let $S  \in \R^{d \times t}$ be a sampling matrix selecting $t$ columns of $A$, where each column of $S$ is set independently to $\frac{1}{\sqrt{tp_i}} e_i$ with probability $p_i$, where $e_i$ is the $i^{th}$ standard basis vector. Then, with probability $\ge 1-\delta$, $S$ satisfies the conditions of Theorem \ref{thm:spectralApprox} and hence $\tilde A = AS$ is an $(\epsilon,0,k)$-projection-cost-preserving sketch of $A$.
\end{corollary}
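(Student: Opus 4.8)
The plan is to verify the two hypotheses of Theorem~\ref{thm:spectralApprox} separately, each with probability $\ge 1-\delta/2$, and then union bound. Write $\lambda_0 = \frac{\norm{A-A_k}_F^2}{k}$ for the ridge parameter defining the scores, so that the theorem's parameter is $\lambda = \frac{\epsilon}{24}\lambda_0$. Two elementary facts are used repeatedly. First, the effective dimension is small: $\sum_{i=1}^d \tau_i(A) = \sum_j \frac{\sigma_j^2}{\sigma_j^2+\lambda_0} \le 2k$ (at most $k$ from the top $k$ terms, and at most $\lambda_0^{-1}\sum_{j>k}\sigma_j^2 = k$ from the rest), so when the overestimates are within a constant factor $\sum_i\tilde\tau_i = O(k)$ and $\log(\sum_i\tilde\tau_i/\delta) = O(\log(k/\delta))$. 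Second, $\norm{\Ap}_F^2 \le 2\norm{A-A_k}_F^2$: the at most $k$ singular directions contributing to $\Ap$ with index in $\{p{+}1,\dots,k\}$ each have squared value below $\lambda_0$ (their index exceeds $p$), adding at most $k\lambda_0 = \norm{A-A_k}_F^2$ to the $\norm{A-A_k}_F^2$ coming from indices above $k$.

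\textbf{Hypothesis 1 (spectral approximation).} This is the standard spectral-approximation guarantee of ridge leverage score sampling (see \cite{cohen2017input}). Set $N = AA^T + \lambda_0 I$ and note $N^{-1/2}ASS^TA^TN^{-1/2} = \sum_{r=1}^t M_r$, where $M_r = \frac{1}{t p_{i_r}} N^{-1/2} a_{i_r} a_{i_r}^T N^{-1/2}$ are i.i.d.\ with $\E\big[\sum_r M_r\big] = N^{-1/2}AA^TN^{-1/2}$ (trace $\le 2k$, operator norm $\le 1$) and, since $a_i^T N^{-1} a_i = \tau_i(A)$ and $p_i = \tilde\tau_i/\sum_j\tilde\tau_j$ with $\tilde\tau_i \ge \tau_i(A)$, $\norm{M_r}_2 = \frac1t \cdot \frac{\tau_{i_r}(A)}{p_{i_r}} \le \frac1t \sum_i \tilde\tau_i$. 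An intrinsic-dimension matrix Bernstein bound then shows that for $t = O\big(\frac{\log(k/\delta)}{(\epsilon/24)^2}\sum_i\tilde\tau_i\big)$, with probability $\ge 1-\delta/2$, $\norm{N^{-1/2}(ASS^TA^T - AA^T)N^{-1/2}}_2 \le \frac{\epsilon}{24}$, i.e.\ $-\frac{\epsilon}{24}N \preceq ASS^TA^T - AA^T \preceq \frac{\epsilon}{24}N$. Since $\frac{\epsilon}{24}N = \frac{\epsilon}{24}AA^T + \lambda I$ with $\lambda = \frac{\epsilon}{24}\lambda_0 = \frac{\epsilon\norm{A-A_k}_F^2}{24k}$, this is exactly Hypothesis~1.

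\textbf{Hypothesis 2 (Frobenius norm preservation for $\Ap$).} The crucial observation — and the reason a direct appeal to the spectral bound fails here, as $\tr\big((AA^T+\lambda I)(I-U_pU_p^T)\big) = \norm{\Ap}_F^2 + \lambda(n-p)$ carries a spurious $\lambda n$ term — is a per-column estimate. Expanding $a_i = \sum_j c_{ij} u_j$ in the left-singular basis, $\norm{(\Ap)_i}_2^2 = \sum_{j>p} c_{ij}^2$, while $\tau_i(A) = \sum_j \frac{c_{ij}^2}{\sigma_j^2+\lambda_0} \ge \frac{1}{2\lambda_0}\sum_{j>p} c_{ij}^2$ because $\sigma_j^2 < \lambda_0$ for $j > p$. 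Hence $\norm{(\Ap)_i}_2^2 \le 2\lambda_0 \tilde\tau_i$ and $\frac{\norm{(\Ap)_i}_2^2}{p_i} \le 2\lambda_0\sum_j\tilde\tau_j =: b$, with $b = O(\norm{A-A_k}_F^2)$ since $\lambda_0 k = \norm{A-A_k}_F^2$ and $\sum_j\tilde\tau_j = O(k)$. Writing $\norm{\Ap S}_F^2 = \frac1t\sum_{r=1}^t X_r$ with $X_r = \frac{\norm{(\Ap)_{i_r}}_2^2}{p_{i_r}}$ i.i.d., $\E X_r = \norm{\Ap}_F^2$, $0\le X_r\le b$, and $\E X_r^2 = \sum_i \frac{\norm{(\Ap)_i}_2^4}{p_i} \le b\norm{\Ap}_F^2 \le 2b\norm{A-A_k}_F^2$, a scalar Bernstein inequality gives, for $t = \Omega\big(\frac{\log(k/\delta)}{\epsilon^2}\sum_j\tilde\tau_j\big)$, that with probability $\ge 1-\delta/2$, $\big|\norm{\Ap S}_F^2 - \norm{\Ap}_F^2\big| \le \frac{\epsilon}{12}\norm{A-A_k}_F^2 = \frac{\epsilon}{12}\cdot\frac{\norm{A-A_k}_F^2}{\norm{\Ap}_F^2}\cdot\norm{\Ap}_F^2$, which is exactly Hypothesis~2. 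Union bounding over the two events finishes the proof.

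I expect Hypothesis~2 to be the main obstacle. One must recognize that the spectral bound alone does not give a useful Frobenius estimate for $\Ap$ (this is precisely why Theorem~\ref{thm:spectralApprox} states it as a separate requirement) and instead exploit the structure of the ridge leverage scores via the inequality $\norm{(\Ap)_i}_2^2 \le 2\lambda_0\tau_i(A)$; after that the concentration is routine, but the supporting estimates $\sum_i\tau_i(A)\le 2k$ and $\norm{\Ap}_F^2 = \Theta(\norm{A-A_k}_F^2)$ must be checked so that the absolute error from Bernstein can be cast in the ratio form demanded by the theorem.
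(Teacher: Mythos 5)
Your proposal is correct and takes the same route as the paper, which simply cites Theorem 5 and Lemma 20 of \cite{cohen2017input} for the two hypotheses: an intrinsic-dimension matrix Bernstein bound for the spectral guarantee and a scalar concentration bound for the Frobenius guarantee. You additionally supply the details the paper outsources — notably the per-column estimate $\norm{(\Ap)_i}_2^2 \le 2\lambda_0\,\tau_i(A)$ (valid since $\sigma_j^2 < \lambda_0$ for $j>p$), which is exactly what makes the scalar Chernoff/Bernstein step go through, together with the supporting facts $\sum_i \tau_i(A) \le 2k$ and $\norm{\Ap}_F^2 \le 2\norm{\Ak}_F^2$ — and these are all verified correctly.
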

Note that $\sum_{i=1}^d \tau_i(A) \le 2k$ (see e.g., Lemma 4 of \cite{cohen2017input}) and thus if the approximate ridge leverage scores are within a constant factor of the true ones, $\sum_{i=1}^d \tilde \tau_i = O(k)$ and so $\tilde A$ has $O(k\log(k/\delta)/\epsilon^2)$ columns.
\begin{proof}
The spectral approximation guarantee can be proven with a matrix Bernstein inequality. See Theorem 5 of \cite{cohen2017input}. The Frobenius norm preservation guarantee can be proven with a standard scalar Chernoff bound. See Lemma 20 of \cite{cohen2017input}.
\end{proof}

%\begin{corollary}[Via Approximate ]\label{cor:subGaussian2}
%
%For any $x \in \R^n$ we can write:
%\begin{align*}
%\left |x^T ASS^T Ax - x^T AA^T x \right | &\le \left | x^T  A_kSS^T A_k^T x - x^T A_kA_k^T  x\right | + 2 \left |x^T A_k SS^T \Ak^T x \right |+ \left | x^T \Ak SS^T \Ak^T x - x^T \Ak \Ak^T x  \right |\\
%&\le \epsilon \cdot x^T A_k A_k^T x + 2 \norm{x^T A_k}_2 \cdot \norm{V_k^T SS^T \Ak x}_2 + \norm{x}_2^2 \cdot \norm{\Ak SS^T \Ak^T - \Ak \Ak^T}_F\\
%&\le \epsilon \cdot x^T A_k A_k^T x + 2 \norm{x^T A_k}_2 \cdot \frac{\epsilon}{\sqrt{k}} \norm{V_k}_F \norm{\Ak}_F \norm{x}_2 + \norm{x}_2^2 
%\end{align*}
%\end{corollary}

\begin{corollary}[Deterministic Column Selection]\label{cor:det}
There is a deterministic poly-time algorithm that, given $A \in \R^{n \times d}$, $\epsilon \in (0,1)$ returns sampling matrix $S \in \R^{d \times O(k/\epsilon^2)}$ satisfying the conditions of Theorem \ref{thm:spectralApprox} and thus that $\tilde A = AS$ is an $(\epsilon,0,k)$-projection-cost-preserving sketch of $A$.\footnote{Each column of $S$ is a scaled standard basis vector so $\tilde A = AS$ consists of a subset of reweighted columns of $A$.}
\end{corollary}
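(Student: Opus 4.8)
The plan is to construct $S$ by a deterministic Batson--Spielman--Srivastava-style barrier argument: concretely, I would adapt the deterministic dual-set spectral--Frobenius sparsification of Boutsidis, Drineas, and Magdon-Ismail used in Theorem 15 of \cite{cohen2015dimensionality}, running its spectral half in \emph{ridge-regularized} form. By Theorem \ref{thm:spectralApprox}, it suffices to produce, deterministically and in polynomial time, a nonnegative weight vector $s \in \R^d$ with $\norm{s}_0 = O(k/\epsilon^2)$ whose associated rescaled column-selection matrix $S$ (with columns $\sqrt{s_i}\,e_i$ over the indices $i$ with $s_i > 0$) satisfies conditions (1) and (2) of that theorem. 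Both will fall out of a single greedy selection.

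First I would recast condition (1). Writing $\lambda' \eqdef \norm{A-A_k}_F^2/k$ so that $\lambda = \tfrac{\epsilon}{24}\lambda'$, condition (1) is exactly
\[
-\tfrac{\epsilon}{24}\bigl(AA^T + \lambda' I\bigr) \preceq ASS^TA^T - AA^T \preceq \tfrac{\epsilon}{24}\bigl(AA^T + \lambda' I\bigr);
\]
multiplying out, this says that $\lambda' I + ASS^TA^T$ approximates $AA^T + \lambda' I$ to relative precision $\epsilon/24$. Running the BSS barrier method on $AA^T = \sum_i a_i a_i^T$ with the fixed, never-reweighted term $\lambda' I$ treated as already present in the sketch --- equivalently, spectrally sparsifying the isotropic family $\{(AA^T + \lambda' I)^{-1/2} a_i\}_{i \le d} \cup \{\sqrt{\lambda'}\,(AA^T + \lambda' I)^{-1/2} e_j\}_{j \le n}$ and reweighting only the first $d$ vectors --- produces such an $S$ using $O(s_{\lambda'}/\epsilon^2)$ columns, where $s_{\lambda'} = \sum_j \sigma_j^2/(\sigma_j^2 + \lambda') = \sum_i \tau_i(A)$ is the $\lambda'$-statistical dimension of $A$. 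By the bound recalled after Corollary \ref{cor:ridge}, $s_{\lambda'} \le 2k$, so $O(k/\epsilon^2)$ columns suffice. (Restricted to the directions with $\sigma_j^2 \ge \lambda'$, this regularized bound already implies the two-sided multiplicative guarantee on $A_p$ that Claim \ref{clm:head2} extracts from condition (1), since there $\lambda I$ is negligible against $A_p A_p^T$.)

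To secure condition (2) at the same time, I would carry, inside the very same barrier argument, a two-sided potential on the scalar $\sum_i s_i \norm{(\Ap)_{:,i}}_2^2 = \norm{\Ap S}_F^2$ --- this is the Frobenius half of the dual-set primitive --- which with $O(k/\epsilon^2)$ selected columns keeps $\bigl|\,\norm{\Ap S}_F^2 - \norm{\Ap}_F^2\,\bigr| \le \tfrac{\epsilon}{24}\norm{\Ap}_F^2$. Since $p \le 2k$ and a short computation from the definition of $p$ gives $\norm{\Ap}_F^2 = \norm{A-A_p}_F^2 \le 2\norm{A-A_k}_F^2$, the accuracy demanded by condition (2), namely $\tfrac{\epsilon}{12}\cdot\tfrac{\norm{A-A_k}_F^2}{\norm{A-A_p}_F^2}\cdot\norm{\Ap}_F^2 = \tfrac{\epsilon}{12}\norm{A-A_k}_F^2$, is at least $\tfrac{\epsilon}{24}\norm{\Ap}_F^2$, so the bound above suffices. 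Computing the SVD (hence $\lambda'$, $p$, and $\Ap$) and running the $O(k/\epsilon^2)$ greedy steps is deterministic and polynomial-time.

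The main obstacle is the joint execution: one must show that the two ridge-spectral barriers on $ASS^TA^T$ (lower and upper, relative to $AA^T + \lambda' I$) and the two barriers on the scalar $\norm{\Ap S}_F^2$ can be advanced simultaneously --- at each step there exists a column index together with a weight whose addition leaves all four potentials inside their tracked windows --- and that folding in the fixed $\lambda' I$ component does not inflate the $O(s_{\lambda'}/\epsilon^2)$ bound. This is precisely the content of the dual-set analysis of Boutsidis, Drineas, and Magdon-Ismail (and, with the ridge term, of \cite{cohen2017input}); the remaining work is routine barrier bookkeeping once the potential windows are chosen. Given such an $S$, conditions (1) and (2) hold and Theorem \ref{thm:spectralApprox} delivers the $(\epsilon,0,k)$-projection-cost-preserving guarantee.
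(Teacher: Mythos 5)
Your proposal targets the right theorem and gets the supporting facts right: the $\lambda'$-statistical dimension is $\le 2k$, $p \le 2k$, and $\|A_{\setminus p}\|_F^2 \le 2\|A_{\setminus k}\|_F^2$, so the accuracy budgets line up exactly as you say. But the paper does not re-run a BSS barrier argument with four simultaneous potentials. Instead it reduces both conditions of Theorem \ref{thm:spectralApprox} to a \emph{single} two-sided spectral sparsification problem by a stacking trick: set $B_1 = (AA^T + \lambda I)^{-1/2}A$ and let $b_2 \in \R^d$ be the row vector with entries $\|(A_{\setminus p})_i\|_2/\|A_{\setminus p}\|_F$, then form $B = \tfrac{1}{2}[B_1; b_2] \in \R^{(n+1)\times d}$. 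One checks $\|B\|_2^2 \le 1$ and $\|B\|_F^2 \le k$ (the latter from the ridge statistical-dimension bound you also use), and then a single invocation of the deterministic stable-rank approximate matrix multiplication result of \cite{cohen2015optimal} at accuracy $\epsilon/48$ yields $\|BSS^TB^T - BB^T\|_2 \le \epsilon/96$. The $(n+1,n+1)$ entry of this bound delivers the Frobenius-norm preservation on $A_{\setminus p}$, while the top-left $n \times n$ block, after conjugating by $(AA^T + \lambda I)^{1/2}$, delivers the spectral condition.

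The distinction is more than cosmetic. The dual-set sparsification of Boutsidis--Drineas--Magdon-Ismail that you cite maintains a one-sided (lower) spectral barrier on one family of vectors and a one-sided (upper) Frobenius potential on another; to get the \emph{two-sided} spectral and two-sided Frobenius guarantees that Theorem \ref{thm:spectralApprox} requires, you would need to add and balance upper barriers as well, and verify that at each step a column and weight exist keeping all potentials in their windows. That is precisely the ``main obstacle'' you flag, and it is not routine bookkeeping given the primitive you reference. The paper's stacking construction avoids the question entirely: once both constraints are encoded as one matrix $B$ of bounded stable rank, the joint feasibility is inherited automatically from the black-box two-sided spectral bound, and no multi-potential analysis is needed. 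If you want to complete your route, you would essentially be reproving a variant of Theorem 5 of \cite{cohen2015optimal}; the paper's approach lets you cite it instead.
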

\begin{proof}
This corollary  follows easily from a stable-rank approximate matrix multiplication result given in \cite{cohen2015optimal}:
\begin{theorem}[Theorem 5 of \cite{cohen2015optimal}]\label{thm:cohen}
For any $k > 0$ and $\epsilon \in (0,1)$
there is a deterministic polynomial-time algorithm that, given $B \in \R^{n \times d}$ with $\norm{B}_2^2 \le 1$ and $\norm{B}_F^2 \le k$ returns sampling matrix $S \in \R^{d \times O(k/\epsilon^2)}$ satisfying
\begin{align}
\norm{BSS^T B^T - BB^T}_2 \le \epsilon.
\end{align}
\end{theorem}
For $\lambda = \frac{\norm{A-A_k}_F^2}{k}$ we set $B_1 = (AA^T + \lambda I)^{-1/2} A$ and let $b_2 \in \R^{d}$ be the vector whose $i^{th}$ entry  is equal to $\frac{\norm{(\Ap)_i}_2}{\norm{\Ap}_F}$, where $p$ is as defined in Theorem \ref{thm:spectralApprox} and $(\Ap)_i$ is the $i^{th}$ column of $\Ap$. Let $B = \frac{1}{2} [B_1;b_2]$ (that is, $B \in \R^{{n+1} \times d}$ is $\frac{1}{2} B_1$  with $\frac{1}{2} b_2$ appended as a final row.) Note that $B$ can be computed in polynomial time. Additionally we have:
\begin{align}\label{eq:bsBound}
\norm{B}_2^2 \le\frac{1}{4} \left ( 2\norm{B_1}_2^2 + 2\norm{b_2}_2^2 \right ) \le \frac{1}{4} (2 + 2) = 1
\end{align}
and 
\begin{align}\label{eq:bfBound}
\norm{B}_F^2 = \frac{1}{4} \left ( \norm{B_1}_F^2 + \norm{b_2}_2^2 \right ) \le \frac{1}{4} (2k + 1) \le k
\end{align}
where the second to last inequality follows since 
$$\norm{B}_F^2  = \sum_{i=1}^{\rank(A)} \sigma_i^2(B) = \sum_{i=1}^{\rank(A)} \frac{\sigma_i^2(A)}{\sigma_i^2(A) + \lambda} \le \sum_{i=1}^k \frac{\sigma_i^2(A)}{\sigma_i^2(A)} + \sum_{i=k+1}^{\rank(A)} \frac{\sigma_i^2(A)}{\norm{A-A_k}_F^2/k} = 2k.$$
\eqref{eq:bsBound} and \eqref{eq:bfBound} allow us to apply Theorem \ref{thm:cohen} to $B$ with error parameter $\epsilon/48$ obtaining $S$ with $O(k/\epsilon^2)$ rows satisfying:
\begin{align}\label{eq:b96}
\norm{BSS^T B^T - BB^T}_2 \le \frac{\epsilon}{96}.
\end{align}
\eqref{eq:b96} implies first that 
$$\frac{1}{4} \left | b_2^T S S^T b_2  - b_2^T  b_2 \right | = \frac{1}{4\norm{\Ap}_F^2} \cdot \left | \norm{\Ap S}_F^2  - \norm{\Ap}_F^2 \right |  \le \frac{\epsilon}{96},$$
which gives that 
\begin{align}\label{eq:bfFrobReduce}
\left |  \norm{\Ap S}_F^2  - \norm{\Ap}_F^2 \right | \le \frac{\epsilon}{24} \norm{\Ap}_F^2 \le \frac{\epsilon}{12} \norm{\Ak}_F^2,
\end{align} 
where the last inequality follows from that fact that $\norm{\Ap}_F^2 \le 2 \norm{\Ak}_F^2$. If $p \ge k$ this is true immediately. Otherwise, if $p < k$ it follows since:
\begin{align*}
\norm{\Ap}_F^2 - \norm{\Ak}_F^2 = \sum_{i=p+1}^{k} \sigma_i^2 \le k \cdot \sigma_{p+1}^2 \le \norm{\Ak}_F^2
\end{align*}
and so $\norm{\Ap}_F^2 \le 2\norm{\Ak}_F^2$. \eqref{eq:bfFrobReduce} gives the Frobenius norm preservation condition of Theorem \ref{thm:spectralApprox}. From \eqref{eq:b96} we can also conclude that 
$$B_1 B_1 ^T -\frac{\epsilon}{24} I \preceq B_1SS^T B_1^T \preceq B_1 B_1 ^T +\frac{\epsilon}{24} I $$ which after multiplying by $(AA^T + \lambda I)^{1/2}$ on the right and left (recalling that $B_1 = (AA^T + \lambda I)^{-1/2} A$) gives:
\begin{align*}
AA^T - \frac{\epsilon}{24} (AA^T + \lambda I) \preceq ASS^T A^T \preceq AA^T + \frac{\epsilon}{24} (AA^T + \lambda I),
\end{align*}
which gives the spectral approximation condition of Theorem \ref{thm:spectralApprox}, completing the proof.
\end{proof}
We note that \cite{cohen2015optimal} proves that a number of sketching methods satisfy the stable-rank approximation matrix multiplication result of Theorem \ref{thm:cohen}. We can use an analogous proof to that of Corollary  \ref{cor:det} to prove that all these methods yield projection-cost-preserving sketches via Theorem \ref{thm:spectralApprox}.

\bibliographystyle{alpha}
\bibliography{pcp}	

\newcommand{\etalchar}[1]{$^{#1}$}
\begin{thebibliography}{CEM{\etalchar{+}}15}

\bibitem[CEM{\etalchar{+}}15]{cohen2015dimensionality}
Michael~B. Cohen, Sam Elder, Cameron Musco, Christopher Musco, and Madalina
  Persu.
\newblock Dimensionality reduction for k-means clustering and low rank
  approximation.
\newblock In {\em \STOC{2015}}, pages 163--172, 2015.
\newblock https://arxiv.org/abs/1410.6801.

\bibitem[CMM17]{cohen2017input}
Michael~B. Cohen, Cameron Musco, and Christopher Musco.
\newblock Input sparsity time low-rank approximation via ridge leverage score
  sampling.
\newblock In {\em \SODA{2017}}, pages 1758--1777, 2017.

\bibitem[CNW16]{cohen2015optimal}
Michael~B. Cohen, Jelani Nelson, and David~P. Woodruff.
\newblock Optimal approximate matrix product in terms of stable rank.
\newblock In {\em \ICALP{2016}}, 2016.
\newblock https://arxiv.org/abs/1507.02268.

\bibitem[CYD19]{chowdhury2019structural}
Agniva Chowdhury, Jiasen Yang, and Petros Drineas.
\newblock Structural conditions for projection-cost preservation via randomized
  matrix multiplication.
\newblock {\em Linear Algebra and its Applications}, 2019.

\bibitem[FSS13]{feldman2013turning}
Dan Feldman, Melanie Schmidt, and Christian Sohler.
\newblock Turning big data into tiny data: Constant-size coresets for k-means,
  {PCA} and projective clustering.
\newblock In {\em \SODA{2013}}, pages 1434--1453, 2013.

\bibitem[KN14]{KaneNelson:2014}
Daniel~M. Kane and Jelani Nelson.
\newblock Sparser {J}ohnson-{L}indenstrauss transforms.
\newblock {\em J. ACM}, 61(1), January 2014.

\bibitem[Mus15]{masters}
Cameron Musco.
\newblock Dimensionality reduction for k-means clustering.
\newblock Master's thesis, MITs, 2015.
\newblock https://dspace.mit.edu/handle/1721.1/101473.

\bibitem[MW17]{musco2017sublinear}
Cameron Musco and David~P. Woodruff.
\newblock Sublinear time low-rank approximation of positive semidefinite
  matrices.
\newblock In {\em \FOCS{2017}}, pages 672--683, 2017.
\newblock https://arxiv.org/abs/1704.03371.

\bibitem[Sar06]{Sarlos:2006}
Tamas Sarlos.
\newblock Improved approximation algorithms for large matrices via random
  projections.
\newblock In {\em \FOCS{2006}}, pages 143--152, 2006.

\bibitem[Ver18]{Vershynin:2018}
Roman Vershynin.
\newblock {\em High-Dimensional Probability: An Introduction with Applications
  in Data Science}.
\newblock Cambridge Series in Statistical and Probabilistic Mathematics.
  Cambridge University Press, 2018.

\bibitem[Wai19]{Wainwright:2019}
Martin~J. Wainwright.
\newblock {\em High-Dimensional Statistics: A Non-Asymptotic Viewpoint}.
\newblock Cambridge Series in Statistical and Probabilistic Mathematics.
  Cambridge University Press, 2019.

\bibitem[Woo14]{woodruff2014sketching}
David~P. Woodruff.
\newblock Sketching as a tool for numerical linear algebra.
\newblock {\em Foundations and Trends in Theoretical Computer Science},
  10(1--2):1--157, 2014.

\end{thebibliography}

\end{document}